\theoremstyle{definition}
\newtheorem{theorem}{Theorem}%[section]
\DeclarePairedDelimiterX{\inp}[2]{\langle}{\rangle}{#1, #2}
\newcommand\id{\leavevmode\hbox{\small1\kern-3.3pt\normalsize1}}
\newcommand{\cps}{\mathrel{\ooalign{$\nearrow$\cr \kern-0pt$\nwarrow$}}}
\DeclareDocumentCommand{\mab}{m o}
  {%
    \mu\IfValueT{#2}{_{#2}}^{A\rightarrow B}{(#1)}
  }
\DeclareDocumentCommand{\mba}{m o}
  {%
    \mu\IfValueT{#2}{_{#2}}^{A\leftarrow B}{(#1)}
  }
\begin{document}

\title{Quantifying Causality in Quantum and General Models}

\begin{CJK*}{UTF8}{gbsn}
\author{Ding Jia (贾丁)}
\email{ding.jia@uwaterloo.ca}
\affiliation{Department of Applied Mathematics, University of Waterloo, Waterloo, Ontario, N2L 3G1, Canada}
\affiliation{Perimeter Institute for Theoretical Physics, Waterloo, Ontario, N2L 2Y5, Canada}
%\date{}

\begin{abstract}
In studies of entanglement, finding out if a state is entangled and quantifying the amount of entanglement contained in a state are related but different questions. Similarly in studies of causality, finding out the causal structures compatible with a model and quantifying the causal strengths are related but different questions. Recently much research have been directed towards the first question but considerably less attention is paid to the second one. In this paper we propose axioms for all reasonable quantitative measures of causality to obey. The axioms apply to a broad family of operational probabilistic theories with and without definite causal structure. For quantum models, we study causality measures based on one-shot quantum capacities in detail. These measures are used to define the notions of quantum signalling and quantum causality measures in order to quantify quantum causality.
\end{abstract}

\maketitle
\end{CJK*}

\begin{cmt}

Causality witness?

Q-factor?

\end{cmt}

\section{Introduction}

\begin{cmt}
causality in quantum models. causal discovery. indefinite causal structure. 

yes/no studied, strength not. focus of paper. coarse-grain out signalling correlation that are too noisy for practical purposes and for fundamental purposes (ICS). outline.
\end{cmt}

In operational probabilistic theories causality is usually characterized by the signalling criterion, which says that if one agent can change the measurement outcome probabilities of another agent by choosing different operations, then the first agent can causally influence the second.

The strength of the signalling criterion is that it offers a natural way to characterize causality that applies to not only explicit theories such as quantum and classical theories but also all operational probabilistic theories in general. Nonetheless the signalling criterion is limited in that it only offers yes or no answers to questions about causal structure but does not \textit{quantify} causal strength.

As we seek deeper understandings of causality, the need to quantify causality naturally arises. For instance, in the studies of quantum causality related to quantum gravity causal fluctuations are expected to be a generic phenomenon induced by spontaneous fluctuations of quantum gravitational degrees of freedom \cite{jia2017quantum}. In quantum spacetime because of the universally present quantum fluctuation of causal structure, generically any two parties will have a finite probability of being causally connected if one uses the signalling criterion. This may sound peculiar at first, but it is actually analogous to familiar features of quantum theory such as quantum tunneling which indicates a finite probability for seemingly peculiar events to happen. That such causal fluctuations do not violate locality is explained in \cite{jia2018analogue}, and if we accept what the theory suggests then quantum spacetime has a trivial causal structure at the yes or no level characterization of causality. Clearly the problem is that the signalling criterion does not distinguish ``strong'' and ``weak'' causal connections, and if only one raises the causal strength threshold of qualification towards causal connection by a little bit, most of the previous causal connections will be disqualified and the causal structure will become non-trivial. To apply this idea concretely one needs to study quantitative measures of causal strength.

An analogy can be drawn with entanglement theory. Although earlier studies focus on yes/no criterion for whether some parties share entanglement, many important questions were addressed only through studying quantitative measures of entanglement. %In order to examine the strengths of causal correlations concretely, one needs to study ways of quantifying causality. Yet a general theoretic framework for causality measures and studies on particular instances of causality measures seem to be lacking.

In this paper we define and study causality measures for operational probabilistic theories with definite causal structure and with indefinite causal structure. The definition consists of three axioms for all reasonable causality measures to obey. For quantum theory we study in some detail two particular causality measures based on one-shot communication capacities. We show that for a family of important models describing indefinite causal structure, the one-shot entanglement transmission capacities are exactly solvable. We prove that the the one-shot entanglement transmission capacities can actually be used to reconstruct the causally relevant part of the models themselves.

For quantum theory there are correlations (e.g., the classical identity channel) that allow the transmission of only classical information but not quantum information. In some contexts such as the study of quantum spacetime there is the need to exclude these correlations from having positive causal strength. For this purpose we introduce the notion of ``quantum signalling'' and use it to define ``quantum causality measures'' that fits the purpose.

The present work focuses on studying quantifying causality between two parties. We leave the task of generalizing to multiple parties for future work.

\section{Causality measures}\label{sec:cm}

In this section we list the axioms for causality measures and give some examples of causality measures. There are different frameworks for operational probabilistic theories with definite causal structure and with indefinite causal structure (e.g., \cite{hardy2005probability, leifer2006quantum, leifer2013towards, gutoski2007toward, chiribella2009theoretical, aharonov2009multiple, abramsky2009categorical, coecke2010quantum, hardy2011reformulating, hardy2012operator, chiribella2013quantum, oreshkov2012quantum, araujo2015witnessing, oreshkov2016causal,  oreshkov2015operational, oreshkov2016operational, oeckl2016local, ried2015quantum, maclean2017quantum, fitzsimons2015quantum, costa2016quantum, allen2017quantum}). The following definition of causality measures applies to a wide range of frameworks. The only preliminary concepts needed are correlations (such as a channel) that mediate the causal influence, and local operations that change the correlations in order to exert the causal influence. 

Notably, nothing restricts the definition to quantum theory and everything in this section applies to any operational probabilistic theory with these preliminary concepts. It is only starting with the next section that we focus on quantum theory which allows us to talk about maximally entangled states and the fidelity of states to study some particular causality measures. 

Another point worth emphasizing is that the frameworks do not have to be based on directed acyclic graphs (DAGs) for the causality measures to be applicable. This is important because although many causal frameworks are based on DAGs, there are reasonable frameworks that are naturally associated with hypergraphs rather than graphs, such as Hardy's causaloid framework of indefinite causal structure \cite{hardy2005probability}. At the level of our current study of bipartite causality measures this general applicability is not significant, but it may prove to be advantageous in future works that generalize the study of causality measures to multiple parties.

\subsection{Axioms}\label{subsec:a}

A \textbf{causality measure} $\mu^{A\rightarrow B}(G)$ on parties $A$ and $B$ sharing the correlation $G$ is a real-valued function obeying the following axioms:
\begin{enumerate}
\item $\mu^{A\rightarrow B}(G)$ is non-increasing under local operations within $A$ and $B$.
\item $\mu^{A\rightarrow B}(G) \ge 0$.
\item $\mu^{A\rightarrow B}(G) > 0$ only if $A$ can signal to $B$ using $G$.
\end{enumerate}
Here ``$A$ can signal to $B$ using $G$'' means that by exploiting the correlation $G$, $A$ can change the measurement outcome probabilities of $B$ by choosing different operations. A \textbf{normalized causality measure} further obeys $\sup_G \mu^{A\rightarrow B}(G)=1$ so that $0\le \mu^{A\rightarrow B}(G)\le 1$ for all $G$. The causality measure $\mu^{A\leftarrow B}(G)$ in the opposite direction is defined similarly except that it obeys Axiom 3 with $A$ and $B$ swapped.

Axiom 1 is the main axiom for causality measures. It captures the intuition that the local operations cannot generate causal correlations. An arbitrary $G$ can be mapped to any correlation $G'$ that can be prepared by local operations alone (such as product states). The parties simply discard $G$ and prepare $G'$. Axiom 1 implies that $\alpha=\mu^{A\rightarrow B}(G')$ is the minimum value $\mu^{A\rightarrow B}$ can reach for all $G$, because starting from any $G$ the parties can apply local operations to prepare $G'$. Axiom 1 also implies that any two different $G'$ must share the same value of $\alpha$ for $\mu^{A\rightarrow B}$, because each can be prepared from the other. Axiom 2 sets this minimum value $\alpha$ to zero.

Axioms 1 and 2 resemble the axioms for entanglement measures \cite{horodecki2009quantum}, which was originally defined for states and recently generalized to general quantum correlations including those with indefinite causal structure \cite{jia2017generalizing}. The defining axioms of entanglement measures are that the measures do not increase under the LOCC (local operations and classical communications) operations, and that the measures are non-negative. More precisely, the first axiom for entanglement measures says that they should not increase under LOCC operations allowed by the LOCC setting that one is considering (monotonicity). Here an LOCC setting dictates what LOCC operations are allowed. For example, in some LOCC settings only one-way classical communication is allowed, and in some others no classical communication is allowed. The only difference between the entanglement measure axioms and Axioms 1 and 2 above is that entanglement measures must also be monotonic in the presence of classical communications if the LOCC setting allows them. In LOCC settings where all local operations are allowed (which is the case for most LOCC settings of interest), the monotonicity axiom of entanglement is stronger than Axiom 1 for causality measures. Therefore in a framework\footnote{Although as stated in \cite{jia2017generalizing} entanglement measures are defined specifically for quantum theory, they can easily be generalized to apply to a broad family of probabilistic theories which supports the notion of LOCC operations.} where they are defined the entanglement measures obey Axioms 1 and 2. However, a correlation that contains entanglement certainly does not necessarily contain causal correlation. Therefore Axiom 3 is needed to make sure that causality measures indeed measure causality. Incidentally, in a model where entanglement and causality measures are defined, if the LOCC setting only allows local operations, then causality measures obey the entanglement measure axioms. One could view causality measures as special cases of entanglement measures which obey Axiom 3 in the LOCC setting without classical communication.

We believe the remarks above justify the three axioms as necessary to define causality measures. There remains the question of whether more axioms are needed. One obvious option is to strengthen Axiom 3 by also requiring that $\mu^{A\rightarrow B}(G)>0$ if $A$ can signal to $B$ using $G$. We do not to make this requirement because it exclude some useful information transmission capacities as causality measures. For example, there are channels that can signal but have zero quantum channel capacity.

Another option is to require $\sum_i p_i \mab{G_i} \ge \mab{\sum_i p_iG_i}$ for probability vectors $p_i$. We do not to make this convexity requirement because again it would exclude quantum channel capacity as a causality measure \cite{smith2008quantum}. This choice echoes the choice in entanglement theory not to require entanglement measures to be convex (some useful measures such as distillable entanglement are not know to be convex) \cite{horodecki2009quantum}.

There are potentially other conditions one may want to impose on causality measures, just like there are conditions one may want to impose on entanglement measures in addition to the basic monotonicity and non-negativity axioms. For entanglement theory, the common view is that the two axioms above are the only ones necessary in defining entanglement measures, and other conditions may be imposed depending on particular contexts \cite{horodecki2009quantum}. It seems the case is the same for causality measures and we regard axioms 1 to 3 as sufficient to define causality measures at the basic level. Other conditions may be imposed to suit particular interests. For example, in Section \ref{sec:qcm} we study the the additional condition based on ``quantum signalling'' to define quantum causality measures.

In addition to general causality measures, we also defined normalized causality measures $\mu^{A\rightarrow B}$ for which $\sup_G \mu^{A\rightarrow B}(G)=1$. Normalized measures are useful when one compares correlations for systems with different dimensions. For example, the qubit identity channel and the qutrit identity channel are both channels with no noise and with the maximum causal strength on their respective systems. Yet the quantum channel capacity as a standard causality measure assigns a larger value to the qutrit channel. This assignment is reasonable from the perspective that the qutrit channel is capable of transmitting more information per use. Nevertheless, in other contexts where one quantifies causal strength according to how much noise there is in the correlation, a normalized measure that assigns the value one to both channels would be preferable.

\subsection{Examples}

\begin{itemize}

\item The zero measure.
\begin{align}
\mu_{\text{zero}}^{A\rightarrow B}(G)=0 \quad \text{for all }G.
\end{align}
This function trivially obeys all the three axioms and also the axioms for entanglement measures. It is of no practical value but shows that some function is both a causality measure and an entanglement measure.

\item The signalling measure.
\begin{align}
\mu_{\text{sg}}^{A\rightarrow B}(G)=
\begin{cases}
1, \quad \text{A can signal to B}
\\
0, \quad \text{A cannot signal to B}.
\end{cases}
\end{align}
This function clearly obeys Axioms 1 to 3 and is a causality measure. It is also a normalized causality measure.

Better than, for example, quantum channel capacity, it meets the condition that $\mu^{A\rightarrow B}(G)>0$ if $A$ can signal to $B$ using $G$. Yet it is not convex. Let $G_1$ and $G_2$ be channels that can and cannot signal. Then
\begin{align}
\frac{1}{2} \mu_{\text{sg}}^{A\rightarrow B}(G_1)+\frac{1}{2} \mu_{\text{sg}}^{A\rightarrow B}(G_2)=\frac{1}{2} < 1= \mu_{\text{sg}}^{A\rightarrow B}(\frac{1}{2}G_1+\frac{1}{2}G_2).
\end{align} The biggest drawback is that the signalling measure does not really \textit{quantify} causal strength.

\item For the special case that $G$ is a quantum or classical channel, the various channel capacities are causality measures, as one can easily check that they obey all the three axioms.

The channel capacities quantify how many qubits the channel can transmit per use and are not normalized measures in general. One can easily normalize them by dividing the maximum capacity a channel on the same input and output systems can reach. Precisely, given the channel capacity $C(N)$ on channels $N$ as a causality measure, we normalize it by
\begin{align}\label{eq:ncc}
C_{\text{norm}}(N):=\frac{C(N)}{\sup_{N'\in \mathfrak{C}(N)}C(N')},
\end{align}
where $\mathfrak{C}(N)$ is the set of channels on the same input and output systems of $N$. This suits the need mentioned at the end of the last subsection of finding a measure that assigns the same value one to all noiseless channels.

\item In the next section we study information transmission capacities that can be used as causality measures for general correlations not restricted to channels. They apply even to correlations without definite causal structure. The primary examples are the one-shot entanglement transmission capacities $Q_{\text{ent}}^\rightarrow(G^{AB};\epsilon)$ and the one-shot subspace transmission capacities $Q_{\text{sub}}^\rightarrow(G^{AB};\epsilon)$.

\item Given any causality measure $\mu^{A\rightarrow B}(G)$, a standard way to define a normalized measure is
\begin{align}\label{eq:ncm}
\mu^{A\rightarrow B}_{\text{norm}}(G):=\frac{\mu^{A\rightarrow B}(G)}{\sup_{\mathfrak{C}(G')}\mu^{A\rightarrow B}(G')},
\end{align}
where the supremum is taken over a set of correlations $\mathfrak{C}(G')$ that depends on $G'$. The normalization (\ref{eq:ncc}) for channel capacities is a special case of this general procedure.
\end{itemize}

\section{One-shot quantum capacities}\label{sec:osqc}

The channel capacities provide fairly natural quantitative measures of the causal strength. Yet the traditional definitions of capacities only apply to channels, which are correlations with definite causal structure. The various definitions of   capacities can be generalized to apply to correlations with possibly indefinite causal structure \cite{jiasakharwade}. These generalized capacities can then be used to quantify causality for e.g., quantum gravity, where indefinite causal structure is important.

In this section we focus on two canonical causality measures based on one-shot quantum communication capacities. In communication theory, the asymptotic capacities are usually taught as the canonical capacities \cite{wilde2017quantum}. Yet there is more than one reason to consider the one-shot capacities are the truly canonical ones. The conceptual reason is that practically all correlations for communication comes with noise, so the copies of correlations cannot be strictly identical. In addition, the copies may correlation with each other. Moreover, there is no supply of infinitely many copies of the correlation. The asymptotic capacities do not account for these practical limitations but the one-shot capacities do. The technical reason for preferring one-shot capacities over asymptotic ones is that the asymptotic capacities can be viewed as special cases of the one-shot capacities when the correlation used for communication is a tensor product of $n$ identical correlations and in the limits $n\rightarrow \infty$ and $\epsilon\rightarrow 0$ where $\epsilon$ is the error tolerance. An added reason from indefinite causal structure is that its presence the asymptotic capacities cannot be defined in the most straightforward way \cite{jia2017process}. At present it is still an open question what the best way to define asymptotic capacity is in the presence of indefinite causal structure, but the one-shot capacities do not suffer the same issue.

The one-shot communication tasks of entanglement transmission and subspace transmission and their capacities are originally defined for channels in \cite{buscemi2010quantum}. We generalize the previous definitions to incorporate communication resources with indefinite causal structure. In the next section we solve for the values of the capacities for some simple but important models of indefinite causal structure.

As mentioned, the notion of causality measures applies to frameworks more general than quantum ones and relies only on the concepts correlations that mediate the causal influence, and local operations that can change the correlations in order to exert causal influence. The following tasks require in addition that the states to be transmitted live on complex Hilbert spaces, and that the allowed local operations contain preparations of maximally entangled states for the entanglement transmission task, and preparations of arbitrary pure states on subspaces for the subspace transmission tasks. 

\subsection{Entanglement transmission capacity}\label{subsec:etc}

The goal of the entanglement transmission task is to transmit locally prepared entanglement through the correlation into shared entanglement. Suppose $A$ and $B$ share a correlation $G^{AB}$ that allows $A$ to send quantum states on Hilbert spaces of at most dimension $\tilde{m}$. In the $A$ to $B$ one-shot entanglement transmission task for the correlation $G^{AB}$, for each given dimension $m\le \tilde{m}$, $A$ first prepares a maximally entangled state $\Upphi^{MM'}\in L(\mathcal{H}^M\otimes \mathcal{H}^{M'})$ ($L(\mathcal{H}^x)$ denotes bounded linear operators on Hilbert space $\mathcal{H}^x$) with $\mathcal{H}^M\subset \mathcal{H}^A$, where $\mathcal{H}^A$ is the largest system $A$ can prepare states on, $\dim \mathcal{H}^M=m$, and $M'$ is a copy of the system $M$. Then $A$ keeps the $M$ part of the state intact to herself and send the $M'$ part of the state to $B$ using $G^{AB}$ such that they share a state $\Psi^{MM'}(E,D)$ with part $M$ held by $A$ and part $M'$ held by $B$. In this transmission $A$ applies some encoding local operation $E$ (which must keep the $M$ part of the original state $\Upphi^{MM'}$ intact) and $B$ applies some decoding local operation $D$. The goal is for $\Psi^{MM'}(E,D)$ to be as close to $\Upphi^{MM'}$ as possible.

Here we make a distinction between \textit{active} entanglement transmission and \textit{passive} entanglement transmission. The task of passive entanglement transmission is just as stated above, for which $A$ must keep the system $M$ intact after the original maximally entangled state $\Upphi^{MM'}$ is prepared. The task of active entanglement transmission on the other hand allows $A$ to apply some local operation $E'$ on the $M$ part of $\Psi^{MM'}(E,D)$ to obtain $\Psi^{MM'}(E,E',D)$ before it is compared with the target state $\Upphi^{MM'}$.\footnote{We require that $A$ must finally share entanglement with $B$ by keeping the $M$ part of the originally prepared state in order not to confuse the task of entanglement transmission with entanglement generation. Otherwise the parties can use an entangled state $G^{AB}$ that does not allow signalling to set up shared entanglement by simply discarding the originally prepared $\Upphi^{MM'}$ and keeping $G^{AB}$.} In the literature the distinction between the active and passive entanglement transmission tasks is often not stated explicitly, with some articles adopting the former (e.g., \cite{buscemi2010quantum}) and some others (e.g., \cite{tomamichel2016quantum}) adopting the latter as the ``entanglement transmission task''. It is not clear to us whether the two tasks are equivalent (having the same capacity), so we prefer to state them as different tasks.

For any positive integer $m\le \tilde{m}$, define the $A$ to $B$ \textbf{entanglement transmission fidelity} for $G^{AB}$ as:
\begin{align}\label{eq:gef1}
F_{\text{ent}}(G^{AB};m):=&\max_{\substack{\mathcal{H}^M\subset\mathcal{H}^A\\\dim \mathcal{H}^M=m}}\max_{E,D} \bra{\Upphi^{MM'}}\Psi^{MM'}(E,D)\ket{\Upphi^{MM'}}\quad \text{for the passive task},
\\
F_{\text{ent}}(G^{AB};m):=&\max_{\substack{\mathcal{H}^M\subset\mathcal{H}^A\\\dim \mathcal{H}^M=m}}\max_{E,E',D} \bra{\Upphi^{MM'}}\Psi^{MM'}(E,E',D)\ket{\Upphi^{MM'}}\quad \text{for the active task},\label{eq:gef2}
\end{align}
where $\ket{\Upphi^{A\bar{A}}}\in \mathcal{H}^A\otimes \mathcal{H}^{\bar{A}}$ is the pure state corresponding to $\Upphi^{A\bar{A}}$. In the first maximization the parties try over all the encodings and decodings. In the second maximization $A$ tries over all the subspaces.
%where $F(\rho,\sigma):=\norm{\sqrt{\rho}\sqrt{\sigma}}_1=\Tr\sqrt{\sqrt{\rho}\sigma\sqrt{\rho}}$ is the fidelity function.
Let $0\le \epsilon\le 1$ be a real number. $R=\log m$ is an \textbf{$\epsilon$-achievable rate} if 
\begin{align}
F_{\text{ent}}(G^{AB};m)\ge 1-\epsilon.
\end{align}
The $A$ to $B$ \textbf{one-shot entanglement transmission capacities} of $G^{AB}$ are defined as
\begin{align}
Q_{\text{ent}}^\rightarrow(G^{AB};\epsilon):=\max\{R: R \text{ is $\epsilon$-achievable}\}.
\end{align}
The tasks of $B$ to $A$ transmission with capacity $Q_{\text{ent}}^{\leftarrow}(G^{AB};\epsilon)$ can be defined analogously.

\subsection{Subspace transmission capacity}

The definitions for subspace transmission are analogous. The goal is to transmit any state in some subspace with high fidelity. Suppose $A$ and $B$ share a correlation $G^{AB}$ that allows $A$ to send quantum states on Hilbert spaces of at most dimension $\tilde{m}$. In the $A$ to $B$ transmission task, for each $m\le \tilde{m}$, $A$ picks a subspace $\mathcal{H}^M\subset \mathcal{H}^A$ where $\mathcal{H}^A$ is the largest system $A$ can prepare states on and $\dim \mathcal{H}^M=m$. Arbitrary pure states $\ket{\psi}\in \mathcal{H}^M$ are sent through $G^{AB}$ from $A$ to $B$ such that in the end $B$ gets a state with density operator $\Psi(E,D)\in L(\mathcal{H}^M)$. In the transmission $A$ applies some encoding local operation $E$ and $B$ applies some decoding local operation $D$. The goal is for $\Psi(E,D)$ to be as close to $\ketbra{\psi}$ as possible.

For any positive integer $m\le \tilde{m}$, define the \textbf{minimum output fidelity} for $G^{AB}$ as:
\begin{align}
F_{\text{min}}(G^{AB};m):=\max_{\substack{\mathcal{H}^M\subset \mathcal{H}^A\\ \dim \mathcal{H}^M=m}}\max_{E,D}\min_{\ket{\psi}\in \mathcal{H}^M}\bra{\psi}\Psi(E,D)\ket{\psi}.
\end{align}
In the first maximization the parties try over all the encodings and decodings. In the second maximization $A$ tries over all the subspaces. Let $0\le \epsilon\le 1$ be a real number. $R=\log m$ is an \textbf{$\epsilon$-achievable rate} if 
\begin{align}
F_{\min}(G^{AB};m)\ge 1-\epsilon.
\end{align}
The \textbf{one-shot subspace transmission capacities} of $G^{AB}$ are defined as
\begin{align}
Q_{\text{sub}}^\rightarrow(G^{AB};\epsilon):=\max\{R: R \text{ is $\epsilon$-achievable}\}.
\end{align}
The task for $B$ to $A$ transmission with capacity $Q_{\text{sub}}^{\leftarrow}(G^{AB};\epsilon)$ can be defined analogously.

\section{Capacities for simple models with indefinite causal structure}

In this section we solve for the values of the one-shot capacities for some simple but important models of indefinite causal structure. The models are defined in the process matrix framework \cite{oreshkov2012quantum}, which we lighteningly review below, referring the readers to the original article for details.

\subsection{Process matrices}\label{subsec:pm}

The process matrices are introduced by Oreshkov, Costa and Brukner \cite{oreshkov2012quantum} to incorporate indefinite causal structure into quantum theory. The main idea is to assume that ordinary quantum theory with definite causal structure holds locally, while globally the causal structure can be indefinite.

The local parties where ordinary quantum theory with definite causal structure holds are denote by $A, B, \cdots$. Local parties are where local operations are applied. The correlations that mediate causal influence are the process matrices, usually denoted by $W$. At the fundamental level, the correlations allow one to derive probabilities of observational outcomes within local parties -- they are maps from the local observational outcomes to the real numbers. The process matrices are representations of such maps as operators in Hilbert spaces.

Let the outcomes $i\in\mathcal{I}$ of a local observation be represented by the Choi operators $\hat{M_i}$ of the elements of a quantum instrument $\{M_i\}_{i\in \mathcal{I}}$. When the correlation is described by the process matrix $W$, the probability of observing the joint outcome $(i,j,\cdots,k)$ of $i$ within $A$, $j$ within $B$, ..., $k$ within $C$ is
\begin{align}\label{eq:br}
p(i,j,\cdots,k)=\Tr[(\hat{M}_i\otimes \hat{N}_j\otimes\cdots\otimes \hat{L}_k)^T W],
\end{align}
where $T$ denotes transpose.

Recall that a quantum instrument element $M_i$ within $A$ as a CP map $M_i:L(\mathcal{H}^{a_1})\rightarrow L(\mathcal{H}^{a_2})$ is associated with an input Hilbert space $\mathcal{H}^{a_1}$ and an output Hilbert space $\mathcal{H}^{a_2}$. If $\mathcal{H}:=\mathcal{H}^{a_1}\otimes\mathcal{H}^{a_2}\otimes \mathcal{H}^{b_1}\otimes\mathcal{H}^{b_2} \otimes \cdots \otimes \mathcal{H}^{c_1}\otimes\mathcal{H}^{c_2}$, then $W\in L(\mathcal{H})$. Let $\abs{x}$ stand for the dimension of the Hilbert space $\mathcal{H}^x$. Then the requirements that probabilities are non-negative and normalized imply that $W$ is positive semi-definite, has trace equal to $\abs{a_2}\abs{b_2}\cdots \abs{c_2}$, and lives in a linear subspace of $L(\mathcal{H})$, the detail of which we will not need for this paper. Conversely, any operator in $L(\mathcal{H})$ obeying these three conditions is a process matrix associated with local parties $A,B,\cdots,C$. In order to make explicit the parties associated with a process matrix we sometimes write $W$ as $W^{AB\cdots C}$.

Not only are channels and states in the form of their Choi operators process matrices, but also ordinary quantum theory with definite causal structure is a subtheory within the process matrix framework. The process matrix framework is therefore a generalization of ordinary quantum theory (with finite dimensional systems).

\subsection{Simple models with indefinite causal structure}\label{sec:tm}

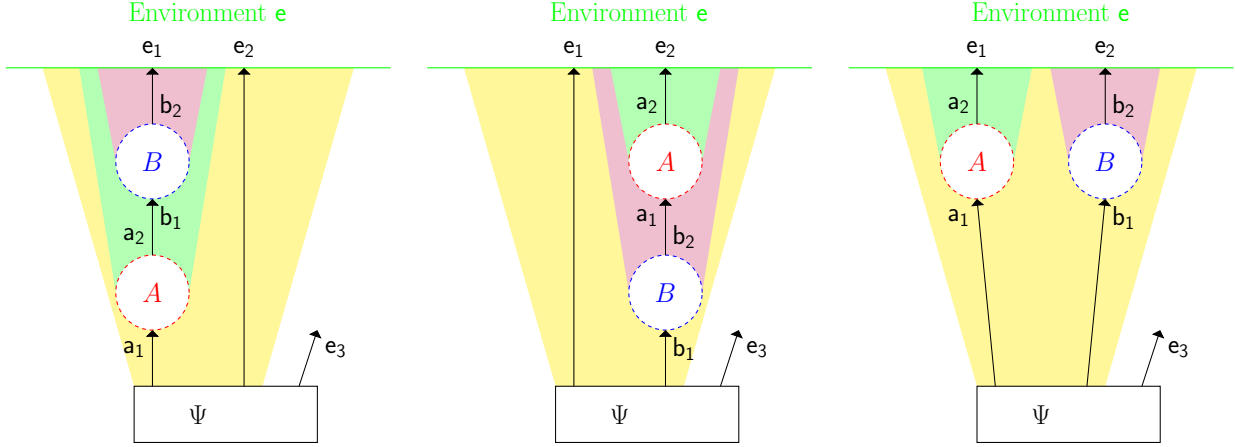
\begin{figure}
    \centering
    \resizebox{1\textwidth}{.36\textwidth}{
\begin{tikzpicture}

%first
\fill [yellow!50] (11,-4.5)  -- (8.5,4) -- (17,4) -- (14.5,-4.5) -- cycle;
\fill [green!30] (10.5,-2)  -- (9.5,4) -- (13.5,4) -- (12.5,-2) -- cycle;
\fill [purple!25] (10.5,1.5)  -- (10,4) -- (13,4) -- (12.5,1.5) -- cycle;
\draw[thick,red,dashed, fill=white]  (11.5,-2) circle (1);
\node[red] at (11.5,-2) {\huge $A$};
\draw[thick,blue,dashed,fill=white]  (11.5,1.5)circle (1);
\node[blue]  at (11.5,1.5) {\huge $B$};
%\draw[thick,blue,dashed,fill=white]  (14,0)circle (1);
%\node[blue] at (14,0) {\huge $B$};
\draw[thick, green] (7.5,4) -- (18,4);
\node[green] at (13,5.5) {\huge Environment $\mathsf{e}$};
\node at (12.75,-5.25) {\huge $\Psi$};
\draw[thick]  (11,-4.5) rectangle (16,-6);
\draw[-triangle 90] [thick]  (14,-4.5) to (14,4);
\draw[-triangle 90] [thick]  (11.5,-4.5) to (11.5,-3);
\draw[-triangle 90] [thick]  (11.5,2.5) to (11.5,4);
\draw[-triangle 90] [thick]  (11.5,-1) to (11.5,0.5);

\node at (12,3) {\huge $\mathsf{b_2}$};
\node at (12,0) {\huge $\mathsf{b_1}$};
\node at (11,-0.5) {\huge $\mathsf{a_2}$};
\node at (11,-3.5) {\huge $\mathsf{a_1}$};

%second
\fill [yellow!50] (34,-4.5)  -- (31.5,4) -- (40,4) -- (37.5,-4.5) -- cycle;
\fill [green!30] (33,1.5)  -- (32.5,4) -- (35.5,4) -- (35,1.5) -- cycle;
\fill [purple!25] (36.5,1.5)  -- (36,4) -- (39,4) -- (38.5,1.5) -- cycle;
%\draw[thick,red, fill=white]  (23.5,-2) circle (1);
%\node[red] at (23.5,-2) {\huge $A$};
\draw[thick,red,dashed,fill=white]  (34,1.5)circle (1);
\node[red] at (34,1.5) {\huge $A$};
\draw[thick,blue,dashed,fill=white]  (37.5,1.5)circle (1);
\node[blue] at (37.5,1.5) {\huge $B$};
\draw[thick, green] (30.5,4) -- (41,4);
\node[green] at (36,5.5) {\huge Environment $\mathsf{e}$};
\node at (35.75,-5.25) {\huge $\Psi$};
\draw[thick]  (34,-4.5) rectangle (39,-6);
\draw[-triangle 90] [thick]  (37,-4.5) to (37.5,0.5);
\draw[-triangle 90] [thick]  (34.5,-4.5) to (34,0.5);
\draw[-triangle 90] [thick]  (34,2.5) to (34,4);

\draw[-triangle 90] [thick]  (37.5,2.5) to (37.5,4);
\node at (33.5,3) {\huge $\mathsf{a_2}$};

\node at (38,0) {\huge $\mathsf{b_1}$};
\node at (38,3) {\huge $\mathsf{b_2}$};
\node at (33.5,0) {\huge $\mathsf{a_1}$};

%third
\fill [yellow!50] (22.5,-4.5)  -- (20,4) -- (28.5,4) -- (26,-4.5) -- cycle;
\fill [purple!25] (24.5,-2)  -- (23.5,4) -- (27.5,4) -- (26.5,-2) -- cycle;
\fill [green!30] (24.5,1.5)  -- (24,4) -- (27,4) -- (26.5,1.5) -- cycle;

\draw[thick,red,dashed, fill=white]  (25.5,1.5) circle (1);
\node[red] at (25.5,1.5) {\huge $A$};
\draw[thick,blue,dashed,fill=white]  (25.5,-2)circle (1);
\node[blue] at (25.5,-2) {\huge $B$};
%\draw[thick,blue,dashed,fill=white]  (37.5,0)circle (1);
%\node[blue] at (37.5,0) {\huge $B$};

\draw[thick, green] (19,4) -- (29.5,4);
\node[green] at (24.5,5.5) {\huge Environment $\mathsf{e}$};
\node at (24.25,-5.25) {\huge $\Psi$};
\draw[thick]  (22.5,-4.5) rectangle (27.5,-6);

\draw[-triangle 90] [thick]  (25.5,-4.5) to (25.5,-3);
\draw[-triangle 90] [thick]  (23,-4.5) to (23,4);

\draw[-triangle 90] [thick]  (25.5,-1) to (25.5,0.5);
\draw[-triangle 90] [thick]  (25.5,2.5) to (25.5,4);

\node at (25,3) {\huge $\mathsf{a_2}$};
\node at (26,-3.5) {\huge $\mathsf{b_1}$};
\node at (26,-0.5) {\huge $\mathsf{b_2}$};
\node at (25,0) {\huge $\mathsf{a_1}$};
\node at (11.5,4.5) {\huge $\mathsf{e_1}$};
\node at (14,4.5) {\huge $\mathsf{e_2}$};
\node at (39.5,-3.5) {\huge $\mathsf{e_3}$};
\node at (34,4.5) {\huge $\mathsf{e_1}$};
\node at (37.5,4.5) {\huge $\mathsf{e_2}$};
\node at (23,4.5) {\huge $\mathsf{e_1}$};
\node at (25.5,4.5) {\huge $\mathsf{e_2}$};
\draw[-triangle 90] [thick] (38.5,-4.5) -- (39,-3);
\node at (28,-3.5) {\huge $\mathsf{e_3}$};
\draw[-triangle 90] [thick] (27,-4.5) -- (27.5,-3);
\node at (16.5,-3.5) {\huge $\mathsf{e_3}$};
\draw[-triangle 90] [thick] (15.5,-4.5) -- (16,-3);
\end{tikzpicture}
}
    
    \caption{The three causal relations. System $c$ not accessible to $A$ and $B$ is not drawn but is supposed to lie somewhere within the environment $e$.}
    \label{fig:3cr}
\end{figure}

To illustrate how a process matrix can encode indefinite causal structure, we consider a family of simple models. In the next subsection we will solve for the one-shot quantum capacities of these models. In a classical theory, two events $A$ and $B$ can have three possible causal relations: $A$ being in the causal past of $B$, $A$ being in the causal future of $B$, and $A$ being causally disconnected with $B$. When there is uncertainty about the causal relation there is indefinite causal structure. This can hold at the classical level, e.g., when the uncertainty is due to the lack of knowledge. A theory of gravitation that incorporates such uncertainty would need to be probabilistic \cite{hardy2016operational}. At the quantum level causal relations can in addition be indefinite in a quantum coherent way. We present a family of simple models that reflects the quantum coherent indefiniteness of the three causal relations between two parties. The model is used to represent spontaneous causal fluctuations of quantum spacetime. Such effects are assumed to be significant for all pairs of parties with a small separation at the order of the Planck length.

The \textit{harmonic clean models} \cite{jia2018analogue} (illustrated in FIG. \ref{fig:3cr}) put the three possible causal relations between parties $A$ and $B$ in ``superposition''. Let the complex numbers $\alpha_1,\alpha_2,\alpha_3$ be the probability amplitudes for the causal relations such that $\abs{\alpha_1}^2+\abs{\alpha_2}^2+\abs{\alpha_3}^2=1$. We introduce the shorthand notation $\alpha$ for $\alpha_1,\alpha_2,\alpha_3$ ($\alpha$ is like a complex 3-vector). Inspired by \cite{feix2017quantum} we define 
\begin{align}
\ket{w(\alpha)}^{GABE}:=&\alpha_1\ket{1}^g\ket{\Psi(\alpha)}^{a_1 e_2e_3}\ket{I}^{a_2 b_1}\ket{I}^{b_2 e_1}+\alpha_2\ket{2}^g\ket{\Psi(\alpha)}^{e_1 b_1e_3}\ket{I}^{b_2 a_1}\ket{I}^{a_2 e_2}\nonumber
\\
&+\alpha_3\ket{3}^g\ket{\Psi(\alpha)}^{a_1b_1e_3}\ket{I}^{a_2 e_1}\ket{I}^{b_2 e_2}.\label{eq:3crps}
\end{align}
We assume that the local subsystems of $A$ and $B$ all have equal dimensions, i.e., $\abs{a_1}=\abs{a_2}=\abs{b_1}=\abs{b_2}$. The vector $\ket{I}^{xy}$ (a pure maximally entangled state) is the Choi state representation of the identity channel from $x$ to $y$. The tripartite state $\ket{\Psi(\alpha)}$ (at the bottom of the pictures in FIG. \ref{fig:3cr}) is the ``initial state'' $A$ and $B$ can receive information from. In a more general model, both the channels between $A$ and $B$ and the state $\Psi$ may depend on $\alpha$. In the model we analyze we make the assumption that the channels do not. This simplification allows us to find exact answers for the causality measures later on, but a more general study may remove the assumption. Each of the three pictures of FIG. \ref{fig:3cr} depicts a definite causal structure between $A$ and $B$, and they correspond to the three terms of (\ref{eq:3crps}). The reason the systems are coupled through $\ket{\Psi}$ or $\ket{I}$ in the the particular way in (\ref{eq:3crps} can be inferred from the pictures in FIG. \ref{fig:3cr}.

The vector $\ket{w}$ takes the form of a superposition of three parts with amplitudes $\alpha_i$. It induces a four-party process matrix
\begin{align}
W^{GABE}(\alpha):=\ketbra{w(\alpha)}{w(\alpha)}^{GABE},
\end{align}
where $A$ consists of systems $a_1$ and $a_2$, $B$ consists of systems $b_1$ and $b_2$, $E$ consists of systems $e_1$, $e_2$ and $e_3$, and $G$ consists of the system $g$. For the local parties $A$ and $B$ the systems belonging to $E$ and $G$ are not accessible, so the process matrix reduces to
\begin{align}\label{eq:wcm}
W^{AB}(\alpha)=&\Tr_{GE}W^{GABE}(\alpha)=\sum_{i=1}^3 p_i W_i^{AB}(\alpha),
\\
\rho^{xye_3}=&\ketbra{\Psi(\alpha)}{\Psi(\alpha)}^{xye_3},
\\
\rho^{a_1}=&\rho^x, ~ \rho^{b_1}=\rho^y, ~ \rho^{a_1b_1}=\rho^{xy},\label{eq:rsc}
\\
W_1^{AB}(\alpha):=&\rho^{x}(\alpha)\otimes \Phi^{a_2b_1}\otimes \pi^{b_2}=\rho^{a_1}(\alpha)\otimes \Phi^{a_2b_1}\otimes \pi^{b_2},\label{eq:w1}
\\
W_2^{AB}(\alpha):=&\rho^{y}(\alpha)\otimes \Phi^{a_1b_2}\otimes \pi^{a_2}=\rho^{b_1}(\alpha)\otimes \Phi^{a_1b_2}\otimes \pi^{a_2},\label{eq:w2}
\\
W_3^{AB}(\alpha):=&\rho^{xy}(\alpha)\otimes \pi^{a_2}\otimes \pi^{b_2}=\rho^{a_1b_1}(\alpha)\otimes \pi^{a_2}\otimes \pi^{b_2}.\label{eq:w3}
\end{align}
The $\alpha$ dependence of $W_i$ comes from the $\alpha$ dependence of $\Psi$. The meaning of (\ref{eq:rsc}) is that $\rho^{a_1}$ is the reduced state of $\rho^{xye_3}$ on the first subsystem, $\rho^{b_1}$ is the reduced state of $\rho^{xye_3}$ on the second subsystem, and $\rho^{a_1b_1}$ is the reduced state of $\rho^{xye_3}$ on the first and second subsystems.

\subsection{Capacities}

The one-shot entanglement transmission capacities can be solved exactly for the harmonic clean models.
\begin{theorem}\label{th:osec}
For $W^{AB}$ in the family of harmonic clean models (\ref{eq:3crps}),
\begin{align}\label{eq:ec}
Q_{\text{ent}}^\rightarrow(W^{AB};\epsilon)=
\begin{cases}
\log\Big(\max\Big\{m\in \mathbb{N}: m\le \sqrt{\frac{1}{1-\frac{\epsilon}{1-p_1}}}\text{ and }m\le \abs{a_2}\Big\}\Big), & p_1<1-\epsilon,
\\
\log\abs{a_2}, & p_1\ge 1-\epsilon.
\end{cases}
\\\label{eq:ecl}
Q_{\text{ent}}^\leftarrow(W^{AB};\epsilon)=
\begin{cases}
\log\Big(\max\Big\{m\in \mathbb{N}: m\le \sqrt{\frac{1}{1-\frac{\epsilon}{1-p_2}}}\text{ and }m\le \abs{b_2}\Big\}\Big), & p_2<1-\epsilon,
\\
\log\abs{b_2}, & p_2\ge 1-\epsilon.
\end{cases}
\end{align}
\end{theorem}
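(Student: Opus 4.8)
The plan is to reduce the optimal entanglement transmission fidelity to a simple convex combination dictated by the decomposition $W^{AB}(\alpha)=\sum_{i=1}^3 p_i W_i^{AB}(\alpha)$ in (\ref{eq:wcm}), and then to invert the $\epsilon$-achievability condition. Because the Born rule (\ref{eq:br}) is linear in $W$, for any fixed encoding $E$ and decoding $D$ the transmitted state splits as $\Psi^{MM'}(E,D)=\sum_{i=1}^3 p_i\,\Psi_i^{MM'}(E,D)$, where $\Psi_i^{MM'}(E,D)$ is the output computed with $W_i^{AB}$ alone. Consequently the passive fidelity (\ref{eq:gef1}) is the convex combination $F=\sum_i p_i F_i$ with $F_i=\bra{\Upphi^{MM'}}\Psi_i^{MM'}(E,D)\ket{\Upphi^{MM'}}$, so it suffices to analyze each term separately.

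First I would treat the three terms. For $W_1^{AB}$ in (\ref{eq:w1}) the factor $\Phi^{a_2b_1}$ is the Choi state of the identity channel from $A$'s output $a_2$ to $B$'s input $b_1$, so when $m\le\abs{a_2}$ the encoding that isometrically embeds $M'$ into $a_2$ together with the decoding that returns $b_1$ to $M'$ transmits the entangled half perfectly, giving $\Psi_1^{MM'}=\Upphi^{MM'}$ and $F_1=1$; since fidelity is at most one this is optimal for that term. For $W_2^{AB}$ and $W_3^{AB}$ in (\ref{eq:w2}) and (\ref{eq:w3}) the output $a_2$ is discarded (the factors $\pi^{a_2}$), so $A$ cannot signal to $B$. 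The key structural observation is that $M$ is the half of $\Upphi^{MM'}$ that $A$ keeps intact, and its only link to the other systems passes through $a_2$; once $a_2$ is traced out, $M$ decouples and the joint state factorizes as $\Psi_i^{MM'}=\frac{\id}{m}\otimes\sigma_i$ for some state $\sigma_i$ on $M'$, independently of $E$ and $D$. A one-line computation then gives $F_i=\bra{\Upphi^{MM'}}\frac{\id}{m}\otimes\sigma_i\ket{\Upphi^{MM'}}=\frac{1}{m^2}$ for $i=2,3$.

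Combining the terms, and using that the non-signalling contributions equal $1/m^2$ for \emph{every} choice of $E,D$ while $F_1$ is simultaneously maximized at $1$, I obtain the exact optimal fidelity $F_{\text{ent}}(W^{AB};m)=p_1+(1-p_1)\frac{1}{m^2}$ for all $m\le\abs{a_2}$. The capacity then follows by inverting $F_{\text{ent}}(W^{AB};m)\ge 1-\epsilon$: rearranging gives $\frac{1}{m^2}\ge 1-\frac{\epsilon}{1-p_1}$, which for $p_1<1-\epsilon$ yields $m\le\sqrt{1/(1-\epsilon/(1-p_1))}$ intersected with $m\le\abs{a_2}$, and for $p_1\ge 1-\epsilon$ is satisfied by every admissible $m$ so that the maximal rate $\log\abs{a_2}$ is achievable. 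Taking logarithms of the largest valid integer $m$ reproduces (\ref{eq:ec}). Finally, (\ref{eq:ecl}) follows by the same argument under the exchange $A\leftrightarrow B$, which swaps the roles of $W_1^{AB}$ and $W_2^{AB}$ and hence replaces $p_1$ by $p_2$ and $\abs{a_2}$ by $\abs{b_2}$.

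The main obstacle I anticipate is making the decoupling claim for $W_2^{AB}$ and $W_3^{AB}$ fully rigorous within the process-matrix formalism: one must track the normalizations in (\ref{eq:br}) carefully and verify that discarding $a_2$ really renders $M$ maximally mixed and uncorrelated with every system held by $B$, so that $F_2=F_3=1/m^2$ holds exactly rather than merely as an upper bound. I would also flag that the clean formula is specific to the passive task (\ref{eq:gef1}); in the active task (\ref{eq:gef2}) the final operation $E'$ on $M$ can in principle raise the non-signalling overlap above $1/m^2$, so the two tasks need not share this value.
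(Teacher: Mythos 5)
Your proposal is correct and follows essentially the same route as the paper's proof: decompose $W^{AB}=\sum_{i}p_iW_i^{AB}$ as in (\ref{eq:wcm}), use linearity of the fidelity in the output density operator to write $F=\sum_i p_iF_i$, achieve $F_1=1$ via the noiseless factor $\Phi^{a_2b_1}$, argue that tracing out $a_2$ in $W_2$ and $W_3$ forces $\Psi_i^{MM'}=\pi^M\otimes\rho_i^{M'}$ and hence $F_2=F_3=1/m^2$ for every protocol, and then invert $p_1+(1-p_1)/m^2\ge 1-\epsilon$ to get (\ref{eq:ec}), with (\ref{eq:ecl}) by symmetry. One comment on your closing caveat: you are right to flag the active task. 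The bound $F_i=1/m^2$ for $i=2,3$ relies on $M$ being left maximally mixed, and a final local operation $E'$ on $M$ (e.g., discard-and-prepare matching $\rho_i^{M'}$) can push the $i=2,3$ overlap up toward $1/m$ at the price of degrading $F_1$, so the optimum for (\ref{eq:gef2}) is not settled by this argument. The paper itself asserts that (\ref{eq:tmef}) is the optimal fidelity ``in (\ref{eq:gef1}) or (\ref{eq:gef2})'' while only giving the passive-task argument, so your flag points to a genuine unaddressed step in the paper's treatment of the active case rather than a gap in your own reasoning; for the passive task your proof is complete.
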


\begin{proof}

We first find exact values of $F_{\text{ent}}(W^{AB};m)$ for the harmonic clean model for any $m\le \abs{a_2}$. Let $\Psi^{MM'}$ be the state $A$ and $B$ finally share and $\Upphi^{MM'}$ be the target maximally entangled state. $W^{AB}$ takes the form of a probabilistic mixture of $\sum_{i=1}^3 p_i W_i$ in (\ref{eq:wcm}), so $\Psi^{MM'}=:\sum_{i=1}^3 p_i\Psi^{MM'}_i$ is a probabilistic mixture of the three states $\Psi^{MM'}_i$ established through $W_i$. The function $F(\Psi^{MM'},\Upphi^{MM'})=\bra{\Upphi^{MM'}}\Psi^{MM'}\ket{\Upphi^{MM'}}$ is linear in the density operator $\Psi^{MM'}$, so $F(\Psi^{MM'},\Upphi^{MM'})=\sum_i p_i F(\Psi^{MM'}_i,\Upphi^{MM'})=:\sum_i p_i F_i$.

$W_1$ represents a noiseless channel, so it can achieve $F_1=1$ using a suitable protocol. In contrast, $F_2,F_3\le 1/m^2$. To see this, note from (\ref{eq:w2}) and (\ref{eq:w3}) that both $W_2$ and $W_3$ trace out $a_2$. Therefore the $M'$ part of the initial state $\Upphi^{MM'}$ is always eventually traced out. The final state takes the form $\Psi^{MM'}_i=\pi^{M}\otimes \rho_i^{M'}$ for $i=2,3$, where $\pi^M$ is the maximally mixed state and $\rho_i^{M'}$ are arbitray states. Then for $i=2,3$
\begin{align}
F_i=&\bra{\Upphi^{MM'}}\Psi^{MM'}_i\ket{\Upphi^{MM'}}
\\
=&\bra{\Upphi^{MM'}}\pi^{M}\otimes \rho_i^{M'}\ket{\Upphi^{MM'}}
\\
=&\frac{1}{m^2}\sum_k \bra{k}^{M'}\rho_i^{M'}\ket{k}^{M'}
\\
=&\frac{1}{m^2}\Tr_{M'}\rho_i^{M'}=\frac{1}{m^2}.
\end{align}

The above equation for $i=2,3$ hold for whatever $\rho_i^{M'}$ and hence for whatever protocol. Therefore a protocol is optimal for the entanglement transmission tasks if and only if it achieves $F_1=1$. An example is to input the $M'$ part of $\Upphi^{MM'}$ into $a_2$, trace out $a_1$ and $b_2$, and take $b_1$ directly as $M'$. Since $F_2=1/m^2$ and $F_3=1/m^2$, the the optimal fidelity in (\ref{eq:gef1}) or (\ref{eq:gef2}) for all $m\le \abs{a_2}$ takes the value
\begin{align}\label{eq:tmef}
F_{\text{ent}}(W^{AB};m)=\sum_i p_i F_i=p_1+(p_2+p_3)/m^2=p_1+(1-p_1)/m^2.
\end{align}

The entanglement fidelity (\ref{eq:tmef}), which holds for all $0<m\le \abs{a_2}$, is a monotonically increasing function of $p_1$. For fixed $\epsilon$ we can directly obtain the entanglement transmission capacity $Q_{\text{ent}}(W^{AB};\epsilon)$. A rate $0<m\le \abs{a_2}$ is achievable if and only if $F_{\text{ent}}(W^{AB};m)\ge 1-\epsilon$. Using  (\ref{eq:tmef}) this is equivalent to $1-F_{\text{ent}}(W^{AB};m)=(1-p_1)(1-1/m^2)\le \epsilon$. To proceed we need to compare $1-p_1$ and $\epsilon$. If $1-p_1>\epsilon$, a simple calculation reveals that $m\le \abs{a_2}$ is achievable if and only if
\begin{align}\label{eq:mub}
m\le \sqrt{\frac{1}{1-\frac{\epsilon}{1-p_1}}}.
\end{align}
If $1-p_1\le \epsilon$, all $0<m\le \abs{a_2}$ are achievable. The maximum achievable value, $\mathsf{m}$, gives the capacity $Q_{\text{ent}}^\rightarrow(W^{AB};\epsilon)=\log \mathsf{m}$. We have
\begin{align}
Q_{\text{ent}}^\rightarrow(W^{AB};\epsilon)=
\begin{cases}
\log\Big(\max\Big\{m\in \mathbb{N}: m\le \sqrt{\frac{1}{1-\frac{\epsilon}{1-p_1}}}\text{ and }m\le \abs{a_2}\Big\}\Big), & p_1<1-\epsilon,
\\
\log\abs{a_2}, & p_1\ge 1-\epsilon.
\end{cases}
\end{align}
A symmetric proof establishes (\ref{eq:ecl}).
\end{proof}

Clearly $Q_{\text{ent}}^\rightarrow(W^{AB};\epsilon)$ is a non-decreasing functions of $p_1$, which makes intuitive sense because as the amplitude of $A$ in the causal past of $B$ increases we expect a better capacity. For $A$ to $B$ communication, once $p_1$ is fixed the values of $p_2$ and $p_3$ are irrelevant. Note that the capacity also depends on $\abs{a_2}$, which imposes a bound on the maximum achievable value $\mathsf{m}$.

What relations of $\epsilon$ and $p_1$ give the extreme cases of the maximal capacity and the zero capacity? For fixed $\abs{a_2}$, it follows directly from (\ref{eq:ec}) that $Q_{\text{ent}}^\rightarrow(W^{AB};\epsilon)=\log\abs{a_2}$ if and only if \begin{align}
    \frac{\epsilon}{1-p_1}\ge 1-\frac{1}{\abs{a_2}^2}.
\end{align}
On the other hand, $Q_{\text{ent}}^\rightarrow(W^{AB};\epsilon)=0$ if and only if 
\begin{align}\label{eq:ccz}
p_1<1-\frac{4}{3}\epsilon.
\end{align}
In this case the RHS of (\ref{eq:mub}) is less than $2$, so $\mathsf{m}=1$ and $Q_{\text{ent}}^\rightarrow(W^{AB};\epsilon)=\log\mathsf{m}=0$.

Another important thing is that for any harmonic clean model $W$ with $p_1>0$, there is always an $\epsilon$ that allows it to quantum communicate better than non-signalling resources. For any non-signalling resource $V$, the same calculation for $F_{\text{ent}}$ of $W_2$ or $W_3$ applies, so $F_{\text{ent}}(V;m)=1/m^2$. For any non-trivial $m$ ($m\ge 2$) there is an $\epsilon>0$ such that
\begin{align}
p_1+(1-p_1)/m^2>1-\epsilon>1/m^2.
\end{align}
In particular, we can pick such an $\epsilon_0$ for $m=2$. The second inequality above implies that for this $\epsilon$, $Q_{\text{ent}}(V;\epsilon_0)=0$, and the first inequality above implies through (\ref{eq:tmef}) that $Q_{\text{ent}}(W;\epsilon_0)\ge \log 2$.

All the previous conclusions have a symmetric version for $B$ to $A$ communication. 

We move on to study subspace transmission. Let $\Psi$ be the state $B$ finally obtains. Similar to entanglement transmission, we observe that $F(\psi,\Psi)=\bra{\psi}\Psi\ket{\psi}$ is linear in $\Psi$. $\Psi$ in turn is a mixture of $\Psi_i$ established through $W_i$, so $F(\psi,\Psi)=\sum_i p_i F(\psi,\Psi_i)=:\sum_i p_i F_i$.

$W_1$ represents a noiseless channel, so it can achieve $F_1=1$ for any input state $\psi$ by sending it into $a_2$, tracing out $a_1$ and $b_2$, and taking the state that emerges at $b_1$ directly as $\Psi$. In contrast, once $E$ and $D$ are fixed $\Psi_i$ are constant states for $i=2$ and $i=3$, respectively. For these two cases, $\min_{\ket{\psi}} F_i=\min_{\ket{\psi}} \bra{\psi}\Psi_i\ket{\psi} =\lambda_{\text{min}}$, where $\lambda_{\text{min}}$ is the minimum eigenvalue of $\Psi_i$. A protocol that maximizes these values for $i=2,3$ outputs a maximally mixed state so that $\lambda_{\text{min}}=1/m$. Such a protocol always exists because $B$ can always trace out the state he gets and prepares a maximally mixed state, but the protocol may not be unique because $B$ may receive a maximally mixed state directly and hence does not need to re-prepare it.

If $\rho^{b_1}(\alpha_3)\neq \pi^{b_1}$, we cannot obtain a formula for the capacities without knowing $\rho^{b_1}(\alpha_3)$. Consider $F_{\text{min}}(W^{AB};m)$ for $m=\abs{a_2}$. A protocol that maximizes $\min_{\ket{\psi}} F_1$ must ask $B$ to take whatever that he receives at $b_1$ as the final state up to unitaries, because this is the only way that the information from the original state at $A$ reaches him without loss. Yet the same protocol applied to $W_i$ for $i=2,3$ would not maximize $\min_{\ket{\psi}} F_i$, since these would take  $\rho^{b_1}(\alpha_3)\neq \pi^{b_1}$ as the final state up to unitaries, and this state must have  $\lambda_{\text{min}}<1/m$, which is sub-optimal. The actual optimal protocol for the subspace transmission task will then have to make a compromise between optimizing $i=1$ and $i=2,3$, and the precise form of the protocol will depend on what $\rho^{b_1}(\alpha_3)$ is. 

If on the other hand $\rho^{b_1}(\alpha_3)= \pi^{b_1}$, then the same protocol that maximizes $\min_{\ket{\psi}} F_i$ for $i=1$ also maximizes it for $i=2,3$. In fact, this is the optimizing protocol for all $m$ not just $m=\abs{a_2}$. Therefore we have for all $m$, $F_{\text{min}}(W^{AB};m)=p_1+(p_2+p_3)/m=p_1+(1-p_1)/m$. A calculation similar to the one in the proof of Theorem \ref{th:osec} yields the following result.
\begin{theorem}\label{th:ossc}
For $W^{AB}$ in the family of harmonic clean models (\ref{eq:3crps}) with $\rho^{b_1}(\alpha_3)$ equalling the maximally mixed state $\pi^{b_1}$,
\begin{align}\label{eq:sc}
Q_{\text{sub}}^\rightarrow(W^{AB};\epsilon)=
\begin{cases}
\log\Big(\max\Big\{m\in \mathbb{N}: m\le \frac{1}{1-\frac{\epsilon}{1-p_1}}\text{ and }m\le \abs{a_2}\Big\}\Big), & p_1<1-\epsilon,
\\
\log\abs{a_2}, & p_1\ge 1-\epsilon.
\end{cases}
\\\label{eq:scl}
Q_{\text{sub}}^\leftarrow(W^{AB};\epsilon)=
\begin{cases}
\log\Big(\max\Big\{m\in \mathbb{N}: m\le \frac{1}{1-\frac{\epsilon}{1-p_2}}\text{ and }m\le \abs{b_2}\Big\}\Big), & p_2<1-\epsilon,
\\
\log\abs{b_2}, & p_2\ge 1-\epsilon.
\end{cases}
\end{align}
\end{theorem}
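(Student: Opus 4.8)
Most of the analytic content is already assembled in the paragraph preceding the statement: under the hypothesis $\rho^{b_1}(\alpha_3)=\pi^{b_1}$ it establishes that the optimal minimum output fidelity is $F_{\min}(W^{AB};m)=p_1+(1-p_1)/m$ for every $m\le\abs{a_2}$. My plan is therefore to insert this expression into the definition of the one-shot subspace capacity and repeat the elementary extraction carried out for $F_{\text{ent}}$ in the proof of Theorem \ref{th:osec}. The only structural change to bear in mind is that the two noise branches $W_2,W_3$ now contribute $1/m$ --- the smallest eigenvalue of a maximally mixed state on an $m$-dimensional space --- rather than the $1/m^2$ appearing in the entanglement case; this is precisely why (\ref{eq:sc}) lacks the square root present in (\ref{eq:ec}).

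Concretely, I would first rewrite the $\epsilon$-achievability condition $F_{\min}(W^{AB};m)\ge 1-\epsilon$ in the equivalent form $1-F_{\min}(W^{AB};m)=(1-p_1)(1-1/m)\le\epsilon$, and then split on the sign of $1-p_1-\epsilon$. When $p_1\ge 1-\epsilon$ one has $(1-p_1)(1-1/m)\le 1-p_1\le\epsilon$ for every $m$, so all $m\le\abs{a_2}$ are achievable and the maximal rate is $\log\abs{a_2}$. When $p_1<1-\epsilon$, dividing by the positive quantity $1-p_1$ gives $1/m\ge 1-\frac{\epsilon}{1-p_1}$, hence $m\le 1/(1-\frac{\epsilon}{1-p_1})$; intersecting this with the dimension constraint $m\le\abs{a_2}$ and taking the logarithm of the largest admissible integer $m$ produces exactly (\ref{eq:sc}).

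For the reverse capacity (\ref{eq:scl}) the plan is to rerun the same argument with the two parties interchanged, so that $W_2$ becomes the noiseless branch of weight $p_2$ while $W_1$ and $W_3$ are the branches that trace out $b_2$ and therefore cap the fidelity at $1/m$. The one thing I would verify carefully is the symmetric hypothesis: the clean formula $F_{\min}(W^{AB};m)=p_2+(1-p_2)/m$ requires that the marginal $A$ receives at $a_1$ in the $W_3$ branch be maximally mixed, i.e.\ $\rho^{a_1}(\alpha_3)=\pi^{a_1}$, the exact analogue on the $a_1$ side of the condition stated on $\rho^{b_1}$.

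I expect the genuine difficulty to sit not in the capacity extraction above, which is routine algebra, but in the fidelity computation already supplied before the statement --- so if I were proving the theorem from scratch that is where I would concentrate. The delicate point there is the simultaneous optimality of a single protocol across the three branches: the protocol optimal for the noiseless branch $W_1$ (have $B$ output whatever arrives at $b_1$) coincides with an optimizer of the noise branches only because the hypothesis $\rho^{b_1}(\alpha_3)=\pi^{b_1}$ forces $B$ to receive a maximally mixed state there; without it, $B$ would be saddled with a state having $\lambda_{\min}<1/m$ and the true optimum would be a compromise depending on the detailed spectrum of $\rho^{b_1}(\alpha_3)$. Granting that hypothesis, the bounds $\min_{\ket{\psi}}F_i=\lambda_{\min}(\Psi_i)\le 1/m$ for $i=2,3$ together with the achievability $F_1=1$ pin the fidelity to $p_1+(1-p_1)/m$, after which the capacity follows as above.
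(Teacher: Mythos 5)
Your proposal is correct and follows essentially the same route as the paper, which likewise takes the fidelity formula $F_{\min}(W^{AB};m)=p_1+(1-p_1)/m$ established in the preceding paragraph and repeats the achievability extraction from the proof of Theorem \ref{th:osec} (with $1/m$ replacing $1/m^2$, hence no square root). Your remark that the backward formula (\ref{eq:scl}) implicitly requires the symmetric hypothesis $\rho^{a_1}(\alpha_3)=\pi^{a_1}$ is a careful and valid observation that the paper leaves unstated.
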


\subsection{The reconstruction theorem}

Incidentally, within the family of harmonic clean models the one-shot entanglement transmission capacities determines the causally relevant part of the harmonic clean models in the following sense.
\begin{theorem}[Reconstruction from One-Shot Quantum Capacities]\label{th:etr}
Let $W^{AB}$ be any unknown harmonic clean model. Then the capacities $Q_{\text{ent}}^\rightarrow(W^{AB};\epsilon)$ and $Q_{\text{ent}}^\leftarrow(W^{AB};\epsilon)$ indexed by $\epsilon$ determines $\abs{\alpha}:=(\abs{\alpha_1},\abs{\alpha_2},\abs{\alpha_3})$. If $\abs{\alpha_3}\neq 1$, they also determine the subsystem dimensions.
\end{theorem}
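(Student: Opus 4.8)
The plan is to invert the closed forms of Theorem~\ref{th:osec}. Writing $p_i=\abs{\alpha_i}^2$ and $d:=\abs{a_2}=\abs{b_2}$, recall that $Q_{\text{ent}}^\rightarrow(W^{AB};\epsilon)$ depends on the model only through $p_1$ and $d$, while $Q_{\text{ent}}^\leftarrow(W^{AB};\epsilon)$ depends only through $p_2$ and $d$. Each direction is therefore a one-parameter inversion, and consistency demands that the two directions return a common $d$.

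To recover $\abs{\alpha}$ I would read off the jump structure of the step function $\epsilon\mapsto Q_{\text{ent}}^\rightarrow(W^{AB};\epsilon)$. Since $1-F_{\text{ent}}(W^{AB};m)=(1-p_1)(1-1/m^2)$, the rate $\log m$ becomes $\epsilon$-achievable exactly at the threshold $\epsilon^{\rightarrow}_{m}=(1-p_1)(1-1/m^2)$. In particular the capacity first leaves $0$ (the jump to $\log 2$) at $\epsilon^{\rightarrow}_{2}=\tfrac34(1-p_1)$, so
\begin{align}
\abs{\alpha_1}^2=p_1=1-\tfrac43\,\epsilon^{\rightarrow}_{2}.
\end{align}
The symmetric reading of $Q_{\text{ent}}^\leftarrow$ gives $\abs{\alpha_2}^2=p_2$, and then $\abs{\alpha_3}=\sqrt{1-p_1-p_2}$. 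Any later threshold $\epsilon^{\rightarrow}_{m}$ returns the same $p_1$, so the reconstruction of $\abs{\alpha}$ is overdetermined. (If a direction exhibits no jump at all, its capacity is flat at its maximal value $\log d$ from $\epsilon=0$, which forces $p_1=1$; this case is read off directly.)

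To recover the dimension I would use the second branch of the formula, namely $Q_{\text{ent}}^\rightarrow(W^{AB};\epsilon)=\log d$ for $\epsilon\ge 1-p_1$. This plateau at height $\log d$ occupies the interval $[1-p_1,1]$, whose length is $p_1$; reading the plateau value returns $d$. This is precisely where the hypothesis enters: $\abs{\alpha_3}\neq 1$ is equivalent to $p_1+p_2>0$, so at least one of the two directions has $p>0$ and hence a nondegenerate plateau from which $d$ is read. When $\abs{\alpha_3}=1$ both plateaus collapse to the single endpoint $\epsilon=1$, and this clean saturation argument no longer applies.

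The routine part is the inversion of the threshold and plateau data; the delicate part is the boundary analysis. I expect the main obstacle to be treating the degenerate configurations uniformly: confirming that $\abs{\alpha}$ is still pinned down when one or more $p_i$ vanish (so that the staircase loses some of its jumps), and isolating exactly the no-signalling point $p_1=p_2=0$, where the saturation plateau degenerates so that the dimension is no longer forced by this argument. Handling these corner cases, rather than the generic interior inversion, is the crux of the proof.
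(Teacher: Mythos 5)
Your proposal is correct and takes essentially the same route as the paper: both read the answer off the closed form of Theorem~\ref{th:osec}, recovering $p_1=\abs{\alpha_1}^2$ (and symmetrically $p_2$, hence $\abs{\alpha_3}$) from the threshold $\epsilon=\tfrac{3}{4}(1-p_1)$ at which $Q_{\text{ent}}^{\rightarrow}$ first becomes positive, i.e.\ Eq.~(\ref{eq:ccz}), and the dimension from the saturated value $\log\abs{a_2}$ attained once $\epsilon\ge 1-p_1$. The paper merely packages this as pairwise distinguishability of models rather than as an explicit inversion, and the corner cases you flag as the crux are in fact benign, since the first jump sits at $\tfrac{3}{4}(1-p_1)\le\tfrac{3}{4}<1$ for every $p_1\in[0,1)$, including the non-signalling point $p_1=0$.
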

\begin{proof}
By definition, any particular harmonic clean model has $\abs{a_1}=\abs{a_2}=\abs{b_1}=\abs{b_2}$. We first show that if $\abs{\alpha_3}\neq 1$, the quantum capacities determine this dimension. It suffices to show that any pair of $W$ and $W'$ with different subsystem dimensions have a different capacity for some $\epsilon$. Without loss of generality assume that the subsystem dimension for $W$ is larger. By assumption $W$ has $\abs{\alpha_3}\neq 1$. Because $\abs{\alpha_3}\neq 1$ at least one of $p_1$ and $p_2$ is positive. Without loss of generality assume $p_1>0$. By (\ref{eq:ec}) for $\epsilon\ge 1-p_1$, $Q_{\text{ent}}^\rightarrow(W;\epsilon)=\log\abs{a_2}>\log\abs{a_2'}\ge Q_{\text{ent}}^\rightarrow(W';\epsilon)$. Hence the quantum capacities distinguish subsystem dimensions.

We next show that the capacities determine $\abs{\alpha}$. The capacities can tell if $\abs{\alpha_3}=1$, because by Theorem \ref{th:osec} only in this case all the capacities in both directions are zero. In this case $\abs{\alpha_1}=\abs{\alpha_2}=0$, so the capacities determine $\abs{\alpha}$. Next we assume that $\abs{\alpha_3}\neq 1$. We want to show that any pair of $W$ and $W'$ with the same subsystem dimension but different $\abs{\alpha}$ have different capacities for some $\epsilon$. If $W$ and $W'$ are different, then for $j=1$, $j=2$ or both, $\abs{\alpha_j}\neq \abs{\alpha'_j}$. Without loss of generality assume that $\abs{\alpha_1}< \abs{\alpha'_1}$ and hence $p_1< p'_1$. There is an $\epsilon>0$ such that 
\begin{align}
\frac{3}{4}(1-p_1)> \epsilon >\frac{3}{4}(1-p'_1).
\end{align}
By (\ref{eq:ccz}), $Q_{\text{ent}}^\rightarrow(W^{AB};\epsilon)=0$, and $Q_{\text{ent}}^\rightarrow(W'^{AB};\epsilon)> 0$. This establishes the theorem.
\end{proof}

This reconstruction theorem is potentially important in that it suggests a way to quantitatively characterize correlations with indefinite causal structure in general operational probabilistic theories beyond quantum models. The process matrices are defined over complex Hilbert spaces. Although correlations with indefinite causal structure had been defined in more general settings (e.g., Hardy's causaloid \cite{hardy2005probability} and Oreshkov and Giarmatzi's general processes \cite{oreshkov2016causal}), their quantitative features have not been studied in detail. The definition of the one-shot entanglement transmission capacities (and that of the one-shot subspace transmission capacities) may be extended to general models. They offer quantitative characterizations of the more general correlations and the reconstruction theorem suggests they may even characterize all the essential aspects of the correlation as far as the causality is concerned. An preliminary question that deserves further study is to what extent the one-shot entanglement transmission capacities can be used to reconstruct general quantum process matrices outside the family of harmonic clean ones.

\section{Quantum causality measures}\label{sec:qcm}

In operational probabilistic theories, signalling is commonly used as \textit{the} criterion for causality. Yet for quantum models (complex Hilbert space quantum operational probabilistic theories, for which ordinary quantum theory and quantum theories with indefinite causal structure are special cases) there are motivations to introduce another criterion for causality. Indeed, quantum and classical information are different types of information, and we know from communication theory that one may be able to transmit classical information without being able to transmit quantum information. The signalling criterion is defined with respect to influencing classical measurement outcomes, so it may be regarded as a causality criterion based on classical information. Is there a causality criterion based on quantum information?

We propose a quantum causality criterion based on the one-shot quantum transmission tasks defined in Section \ref{sec:osqc}. Roughly speaking the criterion says that if a correlation performs any of the one-shot quantum transmission task better than all non-signalling correlations for any error tolerance $\epsilon$, then the correlation can be used to ``quantum signal''. The traditional signalling criterion is weak in the sense that any influence of the measurement outcome probabilities qualifies a correlation to signalling. Similarly, the quantum signalling criterion is weak in the sense that a better-than-non-signalling-resource performance for any one-shot quantum transmission task qualifies a correlation to quantum signalling.

One use of the quantum causality criterion is to distinguish natural models of quantum spacetime which support indefinite causal structure from unnatural ones. The models of quantum spacetime that only support indefinite causal structure according to the signalling criterion are unnatural. When the medium of causal influence is some material such as a telephone line, it is conceivable that the material only allows the transmission of classical but not quantum information. However, for quantum spacetime itself as the medium, it would be very unnatural for two causally connected parties to share correlations that can only send classical but not quantum information. A natural model of quantum spacetime should have indefinite causal structure according to both the signalling and the quantum signalling criteria.

Another use of the quantum signalling criterion is to update the axioms of causality measures to define quantum causality measures that quantify quantum causal strengths (Subsection \ref{subsec:aqcm}). Quantum causality measures have applications, for instance, in quantifying the causal strength of quantum spacetime correlations.

\subsection{Quantum signalling}

Suppose $G$ is the quantum correlation the two parties $A$ and $B$ share. We say that $A$ can \textbf{quantum signal} to $B$ if there exists an error tolerance $\epsilon$ for which they can perform any one-shot quantum transmission task better than the non-signalling resources in the traditional sense. In other words, we say that $A$ can quantum signal to $B$ if there exists $\epsilon>0$ for which  $Q^{A\rightarrow B}(G;\epsilon)>\sup_{H\in \mathcal{N}}Q^{A\rightarrow B}(H;\epsilon)$, where $\mathcal{N}$ is the set of non-signalling resources defined on the same systems, and $Q$ is any of the one-shot quantum transmission capacities including the active and passive entanglement transmission capacities and the subspace transmission capacity. We call this the ``quantum signalling criterion''. To distinguish quantum signalling from the traditional notion of signalling, we call the latter ``classical signalling'', because it is defined based on classical observational outcomes.

Quantum signalling is stronger than than classical signalling, because by definition in order to quantum signal the parties must beat all classically non-signalling resources, which implies that they share a resource that is can classically signal.

Classical causal correlations do not allow quantum signalling. Classical correlations break entanglement and coherence. For the entanglement transmission task they can only set up shared separable states (otherwise entanglement may be created by LOCC) but not entangled states. Yet separable states will not have more entanglement fidelity than product states. Let $\sum_i p_i \rho^M_i\otimes \sigma^{M'}_i$ be an arbitrary separable state. Then
\begin{align}
F=\bra{\Upphi^{MM'}}\sum_i p_i \rho^M_i\otimes \sigma^{M'}_i\ket{\Upphi^{MM'}}=&\bra{\Upphi^{MM'}}\sum_i p_i \rho^M_i\otimes \sigma^{M'}_i\ket{\Upphi^{MM'}}
\\
=&\frac{1}{\abs{M}}\Tr_M\Big[\sum_i p_i \rho^M_i \sigma^{M}_i\Big]
\\
\le&\max_i \frac{1}{\abs{M}}\Tr_M\Big[\rho^M_i \sigma^{M}_i\Big]
\\
=&\max_i \frac{1}{\abs{M}}\bra{\Upphi^{MM'}} \rho^M_i\otimes \sigma^{M'}_i\ket{\Upphi^{MM'}}.
\end{align}
Therefore the separable state does not have greater entanglement fidelity than the product state $\rho^M_i\otimes \sigma^{M'}_i$ for some $i$. Because any product state can be created established without signalling, classical correlations do not perform better than the classically non-signalling resources for entanglement transmission. For the subspace transmission task, note that even the most effective classical causal correlation, the classical identity channel, cannot achieve a greater minimum output fidelity than classically non-signalling correlations. Suppose the classical identity channel projects onto the $\{\ket{i}\}_{i=1}^d$ basis. The most effective encodings and decodings are unitaries. Without loss of generality assume they are the quantum identity channels. The worst case scenario for the minimum output fidelity is with the input state $\ket{\psi}=\sum_{i=1}^d \frac{1}{\sqrt{d}} \ket{i}$. Then $\Psi=\frac{1}{d}\id$ and $F=\bra{\psi}\Psi\ket{\psi}=1/d$. This minimum output fidelity can be matched if $A$ and $B$ share a classically non-signalling correlation and for the transmission $B$ traces out whatever he receives and outputs the maximally mixed state. Therefore classical correlations do not perform better than the classically non-signalling resources for subspace transmission either and hence they cannot quantum signal.

We note that entanglement \textit{generation} capacities do not count as \textit{transmission} capacities, because as mentioned in Subsection \ref{subsec:etc} a correlation (e.g., an entangled state) that does not allow the transmission of quantum information may have a positive entanglement generation capacity. This situation contrasts that with the asymptotic capacities for quantum channels, for which the three capacities of entanglement transmission, subspace transmission, and entanglement generation agree. One reason for the difference is that restricted to channels nothing can generate entanglement without being able to transmit entanglement. On the other hand, correlations with indefinite causal structure such as process matrices contain entangled states as special cases, but these can generate entanglement without being able to transmit entanglement. The inclusion of such correlations break the ``degeneracy'' of the three capacities.

Another reason for the difference is that quantum signalling is defined using one-shot capacities rather than asymptotic ones. For quantum channels there is an inequality that relates the entanglement transmission and subspace transmission capacities \cite{buscemi2010quantum}:
\begin{align}\label{eq:oscieq}
Q_{\text{ent}}(N;\epsilon)-1\le Q_{\text{sub}}(N;2\epsilon)\le Q_{\text{ent}}(N;4\epsilon),
\end{align}
which shows that the two capacities are closely related. However, it does not set up an equivalence of the two capacities. Neither can it be used to pick one out of the two capacities to define quantum signalling to yield a weaker quantum signalling criterion than with the other one. The incomparability of the one-shot quantum capacities leaves us with the need to check each type of capacity to qualify quantum signalling.

\subsection{Axioms for quantum causality measures}\label{subsec:aqcm}

A \textbf{quantum causality measure} $\mu^{A\rightarrow B}(G)$ on local parties $A$ and $B$ sharing correlation $G$ is a real-valued function obeying the following axioms:
\begin{enumerate}
\item $\mu^{A\rightarrow B}(G)$ is non-increasing under local operations within $A$ and $B$.
\item $\mu^{A\rightarrow B}(G) \ge 0$.
\item $\mu^{A\rightarrow B}(G) > 0$ only if $A$ can quantum signal to $B$ using $G$.
\end{enumerate}
A \textbf{normalized quantum causality measure} further obeys $\sup_R \mu^{A\rightarrow B}(G)=1$ so that $0\le \mu^{A\rightarrow B}(G)\le 1$ for all $G$. The quantum causality measure $\mu^{A\leftarrow B}(G)$ in the opposite direction is defined similarly except that it obeys Axiom 3 with $A$ and $B$ swapped.

In comparison to causality measure axioms, the only difference is that in axiom 3 ``quantum signal'' is used in place of ``signal''.

\subsection{Examples of quantum causality measures}

\begin{itemize}

\item The zero measure.
\begin{align}
\mu_{\text{zero}}^{A\rightarrow B}(G)=0 \quad \text{for all }G.
\end{align}
This function trivially obey all the three axioms. It is a causality measure, a quantum causality measure, and an entanglement measure.

\item The quantum signalling measure.
\begin{align}
\mu_{\text{qsg}}^{A\rightarrow B}(G)=
\begin{cases}
1, \quad \text{A can quantum signal to B}
\\
0, \quad \text{A cannot quantum signal to B}.
\end{cases}
\end{align}
This function clearly obeys axioms 1 to 3 and is a quantum causality measure. It is also a normalized measure.

\item For quantum channels the quantum channel capacities are quantum causality measures, as one can easily check. Their normalization as in (\ref{eq:ncc}) are normalized quantum causality measures that assign the value one to noiseless channels.

\item For arbitrary correlations that may or may not contain indefinite causal structure, the one-shot entanglement transmission and subspace transmission capacities defined and studied in previous sections are quantum causality measures. Axioms 1 to 3 hold for these capacities directly by their definitions.

Definitions similar to (\ref{eq:ncm}) yield normalized capacities that assign the value one to the maximally causal correlations such as the identity channel:
\begin{align}
Q_{\text{norm}}(G;\epsilon):=\frac{Q(G;\epsilon)}{\sup_{G'\in \mathfrak{C}(G)}Q(G';\epsilon)},
\end{align}
where $\mathfrak{C}(G)$ is the set of correlations on the same systems of $G$, and $Q$ stands for any of the one-shot quantum capacities.

\end{itemize}

\section{Discussion}\label{sec:dis}

The present work is partially inspired by the previous work of Fitzsimons Jones and Vedral (FJV) who introduced ``causality monotones'' for ``pseudo-density matrices'' as a measure of causality \cite{fitzsimons2015quantum}. Pseudo-density matrices as they stand in \cite{fitzsimons2015quantum} are generalization of density matrices and describe qubit systems that reside at different times. Given a pseudo-density matrix $R$, a causality monotone $f(R)$ is required to obey
\begin{enumerate}
\item $f(R) \ge 0$, with $f(R) = 0$ if $R$ is completely positive, and $f(R) = 1$ for any $R$ obtained from two consecutive measurements on a single qubit closed system.
\item $f(R)$ is invariant under unitary operations.
\item $f(R)$ is non-increasing under local operations.
\item $\sum_i p_i f(R_i) \ge f(\sum_i p_iR_i)$.
\end{enumerate}

There are some obvious similarities and important differences between the FJV axioms for causality monotones and the axioms for (quantum) causality measures. The biggest difference is that the (quantum) causality measure axioms apply to general models, whereas the causality monotone axioms apply specifically to pseudo-density matrices. Pseudo-density matrices have some limitations which makes other frameworks more preferable to study quantum theory with generalized states. In particular, in more general models such as the process matrices systems can have arbitrary finite dimensions, measurements are not restricted to projective ones, and measurement update rules are more flexible.

In terms of the content of the axioms, the $f(R)\ge 0$ part of the first axiom of FJV is the same as Axiom 2 for (quantum) causality measures. The rest of FJV Axiom 1 depend on the specific construction of pseudo-density matrices and do not have general applicability. For general correlations Axioms 2 of FJV would have an analogue saying that $\mu^{A\rightarrow B}(G)$ is invariant under local unitary operations within $A$ and within $B$. Yet this would actually follows from causality measure Axiom 2 because local unitary operations are reversible. Suppose for some local unitary $U$, $\mu^{A\rightarrow B}(U(G))<\mu^{A\rightarrow B}(G)$. Then by causality measure Axiom 2, $\mu^{A\rightarrow B}(G)=\mu^{A\rightarrow B}(U^{-1}\circ U(G))<\mu^{A\rightarrow B}(U(G))<\mu^{A\rightarrow B}(G)$, which is a contradiction. By causality measure Axiom 2, $U$ cannot increase $\mu^{A\rightarrow B}(G)$ either. Hence any unitary operation must keep $\mu^{A\rightarrow B}(G)$ constant. Axiom 3 of FJV is the same as Axiom 1 for (quantum) causality measures. Axiom 4 of FJV expresses the convexity of causality measures and as already discussed in Subsection \ref{subsec:a} is too stringent because it would rule out communication capacities as causality measures.

Another significant difference is each particular (quantum) causality measure assigns two functions $\mu^{A\rightarrow B}$ and $\mu^{B\rightarrow A}$ to a pair of parties $A$ and $B$, while each particular causality monotone assigns only one function. With definite causal structure, there can only be causal influence for at most one direction, so one of the two measures $\mu^{A\rightarrow B}$ and $\mu^{B\rightarrow A}$ is constantly zero and is redundant. In this case it is reasonable to have one function as causal monotones do. However, when there is indefinite causal structure both $\mu^{A\rightarrow B}$ and $\mu^{B\rightarrow A}$ are relevant. 

As defined, the FJV causality monotones are not restricted to two parties, although the explicit examples studied in \cite{fitzsimons2015quantum} are bipartite pseudo-density matrices. The causality measures as defined in this work only apply to two parties. In the multipartite case it remains to be investigated what the FJV causality monotones measure, and how the causality measures generalize.

Janzing, Balduzzi, Grosse-Wentrup and Sch{\"o}kopf (JBGWS) proposed a set of five postulates for measures of causal strength \cite{janzing2013quantifying}. The authors mention that although they regard them as natural postulates, they ``do not claim that every reasonable measure of causal strength should satisfy these postulates''. In contrast, the axioms we propose in this paper are intended to hold for all measures of causality. A somewhat restrictive postulate of JBGWS is the ``mutual information'' postulate, which says that ``if the true causal DAG reads $X \rightarrow Y$, then $\mathfrak{C}_{X\rightarrow Y} =I(X;Y)$. Here $X$ and $Y$ are nodes of a DAG, $X \rightarrow Y$ means that $X$ influences $Y$ directly -- $X$ changes the probability distribution of $Y$ even if all other variables are held constant, $\mathfrak{C}_{X\rightarrow Y}$ is the causal strength measure for $X$ causally influencing $Y$, and $I(X;Y)$ is the classical mutual information. The mutual information postulate sets classical mutual information as the default causal strength measure when the condition of the postulate is met. In our view this postulate cannot be imposed on general causality measures because it excludes other useful causality measures such as the quantum capacities. Another restriction of the JBGWS postulates is that as stated they apply to causal models based on DAGs. As mentioned in Section \ref{sec:cm}, the causality measure axioms of this paper do not require the models to be based on DAGs, and hence apply to more general models such as the causaloid models \cite{hardy2005probability} which are naturally associated with hypergraphs rather than graphs. 

An advantage of the JBGWS postulates is that they take into consideration the possible variation of variables outside of the nodes associated with the causal arrows. In the present paper we assumed that the correlations $G$ are bipartite, so that there are no degree of freedom left in parties other than $A$ and $B$. Each $G$ is understood as a conditional correlation $G(v_C,v_D,\cdots)$ that arises when all other parties $C,D,\cdots$ have already conducted their operations and observed certain outcomes, with these parameters denoted as $v_C,v_D\cdots$ for $G$ to depend on. A more general study of causality measures that allows these parameters to vary would be a study of multipartite causality measures which also generalize the party $A$ exerting the causal influence and the party $B$ receiving the causal influence to multiple parties. This general study is left for future work.

\section{Conclusion}\label{sec:c}

In this paper we proposed three axioms for all reasonable quantitative measures of causal strength to obey. The axioms apply to any theory which contains the concepts of correlations that  mediate  the  causal  influence,  and  local  operations  that  change  the correlations in order to exert the causal influence. In particular, the axioms apply to theories with indefinite causal structure. The canonical examples of causality measures the various notions of information transmission capacities. For quantum models, we defined and studied the one-shot entanglement transmission and subspace transmission capacities as causality measures  in detail. These causality measures in turn allow us to define the notions of quantum signalling and quantum causality measures for quantum models such that correlations that cannot transmit quantum information have zero quantum causality measures.

Some natural questions arise from the present work that deserve further investigation. We studied causality measures for two parties. An obvious question is how to generalize the study to multiple parties. Another question concerns the use of one-shot quantum capacities to define quantum signalling. The motivation is to find the weakest criterion of quantum signalling that makes sense, and one-shot capacities which tolerate some errors yield a weaker criterion than capacities that do not tolerate any error (such as asymptotic capacities). Although it is reasonable to use quantum capacity to qualify quantum signalling and the one-shot quantum capacities give the weakest criterion among all standard quantum capacities, one needs to keep an open mind on finding still weaker criterion. In terms of the relation between causality measures and entanglement measures, we pointed out in Section \ref{sec:cm} that in frameworks where causality and entanglement measures can be defined, causality measures are special cases of entanglement measures that obey Axiom 3 in the LOCC setting without classical communication. Because of this connection it is possible to harness knowledge of entanglement measures and apply it to causality measures. Finally, we proved an interesting reconstruction theorem for the family of harmonic clean quantum models of indefinite causal structure that allows one to reconstruct the causally relevant parameters of the models from the one-shot entanglement transmission capacities. It is worth investigating further to what extent the one-shot capacities characterize general models of indefinite causal structure.

\section*{Acknowledgement}
I thank Lucien Hardy and Achim Kempf for guidance and support as supervisors, and Fabio Costa for making the crucial suggestion during the Spacetime and Information 2017 Workshop of using the communication capacity as a quantitative measure of causal structure.

Research at Perimeter Institute is supported by the Government of Canada through the Department of Innovation, Science and Economic Development Canada and by the Province of Ontario through the Ministry of Research, Innovation and Science. This work is supported by a grant from the John Templeton Foundation. The opinions expressed in this work are those of the author's and do not necessarily reflect the views of the John Templeton Foundation.

\bibliographystyle{apsrev}
\bibliography{bib.bib}

\begin{thebibliography}{35}
\expandafter\ifx\csname natexlab\endcsname\relax\def\natexlab#1{#1}\fi
\expandafter\ifx\csname bibnamefont\endcsname\relax
  \def\bibnamefont#1{#1}\fi
\expandafter\ifx\csname bibfnamefont\endcsname\relax
  \def\bibfnamefont#1{#1}\fi
\expandafter\ifx\csname citenamefont\endcsname\relax
  \def\citenamefont#1{#1}\fi
\expandafter\ifx\csname url\endcsname\relax
  \def\url#1{\texttt{#1}}\fi
\expandafter\ifx\csname urlprefix\endcsname\relax\def\urlprefix{URL }\fi
\providecommand{\bibinfo}[2]{#2}
\providecommand{\eprint}[2][]{\url{#2}}

\bibitem[{\citenamefont{Jia}(2017)}]{jia2017quantum}
\bibinfo{author}{\bibfnamefont{D.}~\bibnamefont{Jia}} (\bibinfo{year}{2017}),
  \bibinfo{note}{{UWSpace}. http://hdl.handle.net/10012/11998}.

\bibitem[{\citenamefont{Jia}()}]{jia2018analogue}
\bibinfo{author}{\bibfnamefont{D.}~\bibnamefont{Jia}},
  \bibinfo{note}{manuscript in preparation}.

\bibitem[{\citenamefont{Hardy}({\natexlab{a}})}]{hardy2005probability}
\bibinfo{author}{\bibfnamefont{L.}~\bibnamefont{Hardy}},
  \eprint{arXiv:gr-qc/0509120}.

\bibitem[{\citenamefont{Leifer}(2006)}]{leifer2006quantum}
\bibinfo{author}{\bibfnamefont{M.~S.} \bibnamefont{Leifer}},
  \bibinfo{journal}{Physical Review A} \textbf{\bibinfo{volume}{74}},
  \bibinfo{pages}{042310} (\bibinfo{year}{2006}).

\bibitem[{\citenamefont{Leifer and Spekkens}(2013)}]{leifer2013towards}
\bibinfo{author}{\bibfnamefont{M.~S.} \bibnamefont{Leifer}} \bibnamefont{and}
  \bibinfo{author}{\bibfnamefont{R.~W.} \bibnamefont{Spekkens}},
  \bibinfo{journal}{Physical Review A} \textbf{\bibinfo{volume}{88}},
  \bibinfo{pages}{052130} (\bibinfo{year}{2013}).

\bibitem[{\citenamefont{Gutoski and Watrous}(2007)}]{gutoski2007toward}
\bibinfo{author}{\bibfnamefont{G.}~\bibnamefont{Gutoski}} \bibnamefont{and}
  \bibinfo{author}{\bibfnamefont{J.}~\bibnamefont{Watrous}}, in
  \emph{\bibinfo{booktitle}{Proceedings of the thirty-ninth annual ACM
  symposium on Theory of computing}} (\bibinfo{organization}{ACM},
  \bibinfo{year}{2007}), pp. \bibinfo{pages}{565--574}.

\bibitem[{\citenamefont{Chiribella et~al.}(2009)\citenamefont{Chiribella,
  D’Ariano, and Perinotti}}]{chiribella2009theoretical}
\bibinfo{author}{\bibfnamefont{G.}~\bibnamefont{Chiribella}},
  \bibinfo{author}{\bibfnamefont{G.~M.} \bibnamefont{D’Ariano}},
  \bibnamefont{and}
  \bibinfo{author}{\bibfnamefont{P.}~\bibnamefont{Perinotti}},
  \bibinfo{journal}{Physical Review A} \textbf{\bibinfo{volume}{80}},
  \bibinfo{pages}{022339} (\bibinfo{year}{2009}).

\bibitem[{\citenamefont{Aharonov et~al.}(2009)\citenamefont{Aharonov, Popescu,
  Tollaksen, and Vaidman}}]{aharonov2009multiple}
\bibinfo{author}{\bibfnamefont{Y.}~\bibnamefont{Aharonov}},
  \bibinfo{author}{\bibfnamefont{S.}~\bibnamefont{Popescu}},
  \bibinfo{author}{\bibfnamefont{J.}~\bibnamefont{Tollaksen}},
  \bibnamefont{and} \bibinfo{author}{\bibfnamefont{L.}~\bibnamefont{Vaidman}},
  \bibinfo{journal}{Physical Review A} \textbf{\bibinfo{volume}{79}},
  \bibinfo{pages}{052110} (\bibinfo{year}{2009}).

\bibitem[{\citenamefont{Abramsky and Coecke}(2009)}]{abramsky2009categorical}
\bibinfo{author}{\bibfnamefont{S.}~\bibnamefont{Abramsky}} \bibnamefont{and}
  \bibinfo{author}{\bibfnamefont{B.}~\bibnamefont{Coecke}},
  \bibinfo{journal}{Handbook of Quantum Logic and Quantum Structures: Quantum
  Logic} pp. \bibinfo{pages}{261--324} (\bibinfo{year}{2009}).

\bibitem[{\citenamefont{Coecke}(2010)}]{coecke2010quantum}
\bibinfo{author}{\bibfnamefont{B.}~\bibnamefont{Coecke}},
  \bibinfo{journal}{Contemporary Physics} \textbf{\bibinfo{volume}{51}},
  \bibinfo{pages}{59} (\bibinfo{year}{2010}).

\bibitem[{\citenamefont{Hardy}({\natexlab{b}})}]{hardy2011reformulating}
\bibinfo{author}{\bibfnamefont{L.}~\bibnamefont{Hardy}},
  \eprint{arXiv:1104.2066}.

\bibitem[{\citenamefont{Hardy}(2012)}]{hardy2012operator}
\bibinfo{author}{\bibfnamefont{L.}~\bibnamefont{Hardy}},
  \bibinfo{journal}{Philosophical Transactions of the Royal Society of London
  A: Mathematical, Physical and Engineering Sciences}
  \textbf{\bibinfo{volume}{370}}, \bibinfo{pages}{3385} (\bibinfo{year}{2012}).

\bibitem[{\citenamefont{Chiribella et~al.}(2013)\citenamefont{Chiribella,
  D’Ariano, Perinotti, and Valiron}}]{chiribella2013quantum}
\bibinfo{author}{\bibfnamefont{G.}~\bibnamefont{Chiribella}},
  \bibinfo{author}{\bibfnamefont{G.~M.} \bibnamefont{D’Ariano}},
  \bibinfo{author}{\bibfnamefont{P.}~\bibnamefont{Perinotti}},
  \bibnamefont{and} \bibinfo{author}{\bibfnamefont{B.}~\bibnamefont{Valiron}},
  \bibinfo{journal}{Physical Review A} \textbf{\bibinfo{volume}{88}},
  \bibinfo{pages}{022318} (\bibinfo{year}{2013}).

\bibitem[{\citenamefont{Oreshkov et~al.}(2012)\citenamefont{Oreshkov, Costa,
  and Brukner}}]{oreshkov2012quantum}
\bibinfo{author}{\bibfnamefont{O.}~\bibnamefont{Oreshkov}},
  \bibinfo{author}{\bibfnamefont{F.}~\bibnamefont{Costa}}, \bibnamefont{and}
  \bibinfo{author}{\bibfnamefont{{\v{C}}.}~\bibnamefont{Brukner}},
  \bibinfo{journal}{Nature Communications} \textbf{\bibinfo{volume}{3}},
  \bibinfo{pages}{1092} (\bibinfo{year}{2012}).

\bibitem[{\citenamefont{Ara{\'u}jo et~al.}(2015)\citenamefont{Ara{\'u}jo,
  Branciard, Costa, Feix, Giarmatzi, and Brukner}}]{araujo2015witnessing}
\bibinfo{author}{\bibfnamefont{M.}~\bibnamefont{Ara{\'u}jo}},
  \bibinfo{author}{\bibfnamefont{C.}~\bibnamefont{Branciard}},
  \bibinfo{author}{\bibfnamefont{F.}~\bibnamefont{Costa}},
  \bibinfo{author}{\bibfnamefont{A.}~\bibnamefont{Feix}},
  \bibinfo{author}{\bibfnamefont{C.}~\bibnamefont{Giarmatzi}},
  \bibnamefont{and}
  \bibinfo{author}{\bibfnamefont{{\v{C}}.}~\bibnamefont{Brukner}},
  \bibinfo{journal}{New Journal of Physics} \textbf{\bibinfo{volume}{17}},
  \bibinfo{pages}{102001} (\bibinfo{year}{2015}).

\bibitem[{\citenamefont{Oreshkov and Giarmatzi}(2016)}]{oreshkov2016causal}
\bibinfo{author}{\bibfnamefont{O.}~\bibnamefont{Oreshkov}} \bibnamefont{and}
  \bibinfo{author}{\bibfnamefont{C.}~\bibnamefont{Giarmatzi}},
  \bibinfo{journal}{New Journal of Physics} \textbf{\bibinfo{volume}{18}},
  \bibinfo{pages}{093020} (\bibinfo{year}{2016}).

\bibitem[{\citenamefont{Oreshkov and Cerf}(2015)}]{oreshkov2015operational}
\bibinfo{author}{\bibfnamefont{O.}~\bibnamefont{Oreshkov}} \bibnamefont{and}
  \bibinfo{author}{\bibfnamefont{N.~J.} \bibnamefont{Cerf}},
  \bibinfo{journal}{Nature Physics} \textbf{\bibinfo{volume}{11}},
  \bibinfo{pages}{853} (\bibinfo{year}{2015}).

\bibitem[{\citenamefont{Oreshkov and Cerf}(2016)}]{oreshkov2016operational}
\bibinfo{author}{\bibfnamefont{O.}~\bibnamefont{Oreshkov}} \bibnamefont{and}
  \bibinfo{author}{\bibfnamefont{N.~J.} \bibnamefont{Cerf}},
  \bibinfo{journal}{New Journal of Physics} \textbf{\bibinfo{volume}{18}},
  \bibinfo{pages}{073037} (\bibinfo{year}{2016}).

\bibitem[{\citenamefont{Oeckl}()}]{oeckl2016local}
\bibinfo{author}{\bibfnamefont{R.}~\bibnamefont{Oeckl}},
  \eprint{arXiv:1610.09052}.

\bibitem[{\citenamefont{Ried et~al.}(2015)\citenamefont{Ried, Agnew, Vermeyden,
  Janzing, Spekkens, and Resch}}]{ried2015quantum}
\bibinfo{author}{\bibfnamefont{K.}~\bibnamefont{Ried}},
  \bibinfo{author}{\bibfnamefont{M.}~\bibnamefont{Agnew}},
  \bibinfo{author}{\bibfnamefont{L.}~\bibnamefont{Vermeyden}},
  \bibinfo{author}{\bibfnamefont{D.}~\bibnamefont{Janzing}},
  \bibinfo{author}{\bibfnamefont{R.~W.} \bibnamefont{Spekkens}},
  \bibnamefont{and} \bibinfo{author}{\bibfnamefont{K.~J.} \bibnamefont{Resch}},
  \bibinfo{journal}{Nature Physics} \textbf{\bibinfo{volume}{11}},
  \bibinfo{pages}{414} (\bibinfo{year}{2015}).

\bibitem[{\citenamefont{MacLean et~al.}(2017)\citenamefont{MacLean, Ried,
  Spekkens, and Resch}}]{maclean2017quantum}
\bibinfo{author}{\bibfnamefont{J.-P.~W.} \bibnamefont{MacLean}},
  \bibinfo{author}{\bibfnamefont{K.}~\bibnamefont{Ried}},
  \bibinfo{author}{\bibfnamefont{R.~W.} \bibnamefont{Spekkens}},
  \bibnamefont{and} \bibinfo{author}{\bibfnamefont{K.~J.} \bibnamefont{Resch}},
  \bibinfo{journal}{Nature Communications} \textbf{\bibinfo{volume}{8}}
  (\bibinfo{year}{2017}).

\bibitem[{\citenamefont{Fitzsimons et~al.}(2015)\citenamefont{Fitzsimons,
  Jones, and Vedral}}]{fitzsimons2015quantum}
\bibinfo{author}{\bibfnamefont{J.~F.} \bibnamefont{Fitzsimons}},
  \bibinfo{author}{\bibfnamefont{J.~A.} \bibnamefont{Jones}}, \bibnamefont{and}
  \bibinfo{author}{\bibfnamefont{V.}~\bibnamefont{Vedral}},
  \bibinfo{journal}{Nature Publishing Group. Scientific Reports}
  \textbf{\bibinfo{volume}{5}}, \bibinfo{pages}{18281} (\bibinfo{year}{2015}).

\bibitem[{\citenamefont{Costa and Shrapnel}(2016)}]{costa2016quantum}
\bibinfo{author}{\bibfnamefont{F.}~\bibnamefont{Costa}} \bibnamefont{and}
  \bibinfo{author}{\bibfnamefont{S.}~\bibnamefont{Shrapnel}},
  \bibinfo{journal}{New Journal of Physics} \textbf{\bibinfo{volume}{18}},
  \bibinfo{pages}{063032} (\bibinfo{year}{2016}).

\bibitem[{\citenamefont{Allen et~al.}(2017)\citenamefont{Allen, Barrett,
  Horsman, Lee, and Spekkens}}]{allen2017quantum}
\bibinfo{author}{\bibfnamefont{J.-M.~A.} \bibnamefont{Allen}},
  \bibinfo{author}{\bibfnamefont{J.}~\bibnamefont{Barrett}},
  \bibinfo{author}{\bibfnamefont{D.~C.} \bibnamefont{Horsman}},
  \bibinfo{author}{\bibfnamefont{C.~M.} \bibnamefont{Lee}}, \bibnamefont{and}
  \bibinfo{author}{\bibfnamefont{R.~W.} \bibnamefont{Spekkens}},
  \bibinfo{journal}{Physical Review X} \textbf{\bibinfo{volume}{7}},
  \bibinfo{pages}{031021} (\bibinfo{year}{2017}).

\bibitem[{\citenamefont{Horodecki et~al.}(2009)\citenamefont{Horodecki,
  Horodecki, Horodecki, and Horodecki}}]{horodecki2009quantum}
\bibinfo{author}{\bibfnamefont{R.}~\bibnamefont{Horodecki}},
  \bibinfo{author}{\bibfnamefont{P.}~\bibnamefont{Horodecki}},
  \bibinfo{author}{\bibfnamefont{M.}~\bibnamefont{Horodecki}},
  \bibnamefont{and}
  \bibinfo{author}{\bibfnamefont{K.}~\bibnamefont{Horodecki}},
  \bibinfo{journal}{Reviews of Modern Physics} \textbf{\bibinfo{volume}{81}},
  \bibinfo{pages}{865} (\bibinfo{year}{2009}).

\bibitem[{\citenamefont{Jia et~al.}(2017)}]{jia2017generalizing}
\bibinfo{author}{\bibfnamefont{D.}~\bibnamefont{Jia}} \bibnamefont{et~al.},
  \bibinfo{journal}{Physical Review A} \textbf{\bibinfo{volume}{96}},
  \bibinfo{pages}{062132} (\bibinfo{year}{2017}).

\bibitem[{\citenamefont{Smith and Yard}(2008)}]{smith2008quantum}
\bibinfo{author}{\bibfnamefont{G.}~\bibnamefont{Smith}} \bibnamefont{and}
  \bibinfo{author}{\bibfnamefont{J.}~\bibnamefont{Yard}},
  \bibinfo{journal}{Science} \textbf{\bibinfo{volume}{321}},
  \bibinfo{pages}{1812} (\bibinfo{year}{2008}).

\bibitem[{\citenamefont{Jia and Sakharwade}({\natexlab{a}})}]{jiasakharwade}
\bibinfo{author}{\bibfnamefont{D.}~\bibnamefont{Jia}} \bibnamefont{and}
  \bibinfo{author}{\bibfnamefont{N.}~\bibnamefont{Sakharwade}},
  \bibinfo{note}{manuscript in preparation}.

\bibitem[{\citenamefont{Wilde}(2017)}]{wilde2017quantum}
\bibinfo{author}{\bibfnamefont{M.~M.} \bibnamefont{Wilde}},
  \emph{\bibinfo{title}{Quantum Information Theory}}
  (\bibinfo{publisher}{Cambridge University Press}, \bibinfo{year}{2017}),
  \bibinfo{edition}{2nd} ed.

\bibitem[{\citenamefont{Jia and Sakharwade}({\natexlab{b}})}]{jia2017process}
\bibinfo{author}{\bibfnamefont{D.}~\bibnamefont{Jia}} \bibnamefont{and}
  \bibinfo{author}{\bibfnamefont{N.}~\bibnamefont{Sakharwade}},
  \eprint{arXiv:1706.05532}.

\bibitem[{\citenamefont{Buscemi and Datta}(2010)}]{buscemi2010quantum}
\bibinfo{author}{\bibfnamefont{F.}~\bibnamefont{Buscemi}} \bibnamefont{and}
  \bibinfo{author}{\bibfnamefont{N.}~\bibnamefont{Datta}},
  \bibinfo{journal}{IEEE Transactions on Information theory}
  \textbf{\bibinfo{volume}{56}}, \bibinfo{pages}{1447} (\bibinfo{year}{2010}).

\bibitem[{\citenamefont{Tomamichel et~al.}(2016)\citenamefont{Tomamichel,
  Berta, and Renes}}]{tomamichel2016quantum}
\bibinfo{author}{\bibfnamefont{M.}~\bibnamefont{Tomamichel}},
  \bibinfo{author}{\bibfnamefont{M.}~\bibnamefont{Berta}}, \bibnamefont{and}
  \bibinfo{author}{\bibfnamefont{J.~M.} \bibnamefont{Renes}},
  \bibinfo{journal}{Nature Communications} \textbf{\bibinfo{volume}{7}}
  (\bibinfo{year}{2016}).

\bibitem[{\citenamefont{Hardy}({\natexlab{c}})}]{hardy2016operational}
\bibinfo{author}{\bibfnamefont{L.}~\bibnamefont{Hardy}},
  \eprint{arXiv:1608.06940}.

\bibitem[{\citenamefont{Feix and Brukner}(2017)}]{feix2017quantum}
\bibinfo{author}{\bibfnamefont{A.}~\bibnamefont{Feix}} \bibnamefont{and}
  \bibinfo{author}{\bibfnamefont{C.}~\bibnamefont{Brukner}},
  \bibinfo{journal}{New Journal of Physics} \textbf{\bibinfo{volume}{19}},
  \bibinfo{pages}{123028} (\bibinfo{year}{2017}).

\bibitem[{\citenamefont{Janzing et~al.}(2013)\citenamefont{Janzing, Balduzzi,
  Grosse-Wentrup, and Sch{\"o}lkopf}}]{janzing2013quantifying}
\bibinfo{author}{\bibfnamefont{D.}~\bibnamefont{Janzing}},
  \bibinfo{author}{\bibfnamefont{D.}~\bibnamefont{Balduzzi}},
  \bibinfo{author}{\bibfnamefont{M.}~\bibnamefont{Grosse-Wentrup}},
  \bibnamefont{and}
  \bibinfo{author}{\bibfnamefont{B.}~\bibnamefont{Sch{\"o}lkopf}},
  \bibinfo{journal}{The Annals of Statistics} \textbf{\bibinfo{volume}{41}},
  \bibinfo{pages}{2324} (\bibinfo{year}{2013}).

\end{thebibliography}

\end{document}